\documentclass[twocolumn,amssymb, nobibnotes, aps, prd, superscriptaddress]{revtex4-1}

\newcommand{\vect}[1]{\boldsymbol{#1}}
\usepackage{graphicx}
\usepackage{amsmath, amsthm, amssymb}
\theoremstyle{definition}

\usepackage{mathdots}
\usepackage{float}
\usepackage{pgfplots} 
\usepackage{tikz-3dplot}
\usepgfplotslibrary{patchplots}
\usepgfplotslibrary{groupplots}
\pgfplotsset{compat=1.6,ylabsh/.style={every axis y label/.style={at={(0,0.5)}, xshift=#1, rotate=90}}}
\usetikzlibrary{positioning}
\usetikzlibrary{shapes}
\usetikzlibrary{shapes.geometric}
\usetikzlibrary{decorations.text}
\usetikzlibrary{decorations.pathreplacing,shapes.misc}
\usetikzlibrary{matrix}
\usepackage{caption,subcaption}
\usepackage[ruled,lined]{algorithm2e}
\usepackage{algorithmic}
\renewcommand{\algorithmiccomment}[1]{\bgroup\hfill//~#1\egroup}

\providecommand{\norm}[1]{\lVert#1\rVert}
\providecommand{\bracket}[1]{\left(#1\right)}
\newenvironment{frcseries}{\fontfamily{frc}\selectfont}{}

\begin{document}
	
\title{Cell strain-stiffening drives cell breakout from embedded spheroids}
\author{Shabeeb Ameen}
\email{mameen@syr.edu}
\affiliation{Physics Department, Syracuse University, Syracuse, NY 13244 USA}
\author{Kyungeun Kim}
\email{kkim@math.ubc.edu}
\affiliation{Department of Mathematics, University of British Columbia, Vancouver, BC V6T 1Z2, Canada}
\author{Ligesh Theeyancheri}
\email{ltheeyan@syr.edu}
\affiliation{Physics Department, Syracuse University, Syracuse, NY 13244, USA}
\author{Minh Thanh}
\email{mohthanh@syr.edu}
\affiliation{Physics Department, Syracuse University, Syracuse, NY 13244, USA}
\author{Mingming Wu}
\email{mw272@cornell.edu}
\affiliation{Department of Biological and Environmental Engineering, Cornell University, Ithaca, NY 14853, USA}
\author{Alison E. Patteson}
\email{aepattes@syr.edu}
\affiliation{Physics Department, Syracuse University, Syracuse, NY 13244, USA}
\author{J. M. Schwarz}
\email{jmschw02@syr.edu}
\affiliation{Physics Department, Syracuse University, Syracuse, NY 13244, USA}
\affiliation{Indian Creek Farm, Ithaca, NY 14850, USA}
\author{Tao Zhang}
\email{zhangtao.scholar@sjtu.edu.cn}
\affiliation{School of Chemistry and Chemical Engineering, Shanghai Jiao Tong University, Shanghai 200240, China}

\date{\today}
\begin{abstract}

Understanding how cells escape from embedded spheroids requires a mechanical framework that connects stress generation within cells, across cells, and between cells and the surrounding extracellular matrix (ECM). Here, we develop such a framework using a 3D vertex model of a spheroid coupled to a fibrous ECM network and derive the 3D Cauchy stress tensor for deformable polyhedral cells, enabling direct quantification of cell-level stresses in three dimensions beyond shape-based or bulk rheological measures. We characterize maximum shear stress distributions for solid-like and fluid-like spheroids. Solid-like spheroids exhibit broad stress distributions and spatial stress gradients, whereas fluid-like spheroids display lower stresses and minimal spatial patterning. We find little generic alignment between cell shape anisotropy and principal stress directions, demonstrating that morphology alone is not a reliable indicator of mechanical state. We further show that individual cells undergo strain stiffening upon elongation, producing nonlinear increases in maximum shear stress. This provides a mechanism by which boundary cells in otherwise low-stress, fluid-like spheroids can transiently generate forces sufficient to remodel the surrounding matrix. These results indicate that invasion cannot be understood solely through tissue-level unjamming transitions, but also involves cell-level stress amplification processes. To investigate how this strain-induced stress amplification couples to collective invasion modes, we introduce an extended 3D vertex model in which cells interact through explicit, tunable cell–cell adhesion springs. Within this minimal mechanical framework, single-cell breakout arises from the combination of strain stiffening and weakened cell–cell adhesion, whereas multi-cell streaming requires an additional ingredient: anisotropic adhesion strengthened along the axis of elongation in highly strained cells and weakened orthogonally. These findings identify distinct mechanical pathways linking cell strain, stress amplification, and adhesion organization to different modes of spheroid invasion.
\end{abstract}		
\maketitle
		
\section{Introduction}
Predicting the invasive potential of a solid tumor requires understanding the interplay between cell-cell interactions and cell-extracellular matrix (ECM) interactions. To quantify this interplay, researchers use {\it in vitro} models of solid tumors, known as embedded spheroids consisting of a cellular collective embedded within a collagen matrix~\cite{friedl1998}. Specifically, experimental studies have demonstrated that factors such as collagen fiber density can dramatically impact the rate of spheroid cells invading the surrounding collagen with cells much less likely to invade at higher collagen densities~~\cite{Huang2020}.  Other experiments illustrate the importance of the cytoskeletal filament vimentin in helping drive cell invasion via a combination of spheroid fluidization and degradation of the collagen~\cite{Ho2024}. Additional experiments show with increasing uni-axial compression that the cells are more likely to invade the surrounding collagen~\cite{Pandey2024,Pandey2025}. 

In addition to determining whether or not the cells invade the surrounding environment, experiments exhibit different types of cell invasion modes, such as single-cell invasion, a stream of cells invading, and a collective front of cells invading its surroundings. Prior theoretical work argues that the different invasion modes are driven by confinement and cell-cell adhesion~\cite{Ilina2020}. For example, high cell-cell adhesion lends to more coordinated cell breakout with low confinement as compared to low cell-cell adhesion leading to individual cell breakout. These findings are based on cellular automaton simulations that do not contain cell deformations or stresses. More recent computational work implements a 2D Voronoi model to demonstrate that higher compressive stress exerted on the spheroid by the environment prevents invasion and low cell-cell adhesion promotes invasion~\cite{Lui_2017}. Finally, the interplay between cell phenotype switching and ECM remodeling to determine invasion profiles has been approached as a coupled dynamical system in which cells move through a 3D ECM while switching between multiple motility phenotypes, with each phenotype having its own migration speed and matrix-remodeling capability, though there is no explicit mechanics in the model~\cite{kim2007}. Given the ever-increasing number of experimental and computational results, there is a compelling need for a 3D computational framework involving both the spheroid and the fibrous network and their respective mechanics to search for guiding principles for experiments. 

\begin{figure*}[t]
             \includegraphics[width=0.49\textwidth]{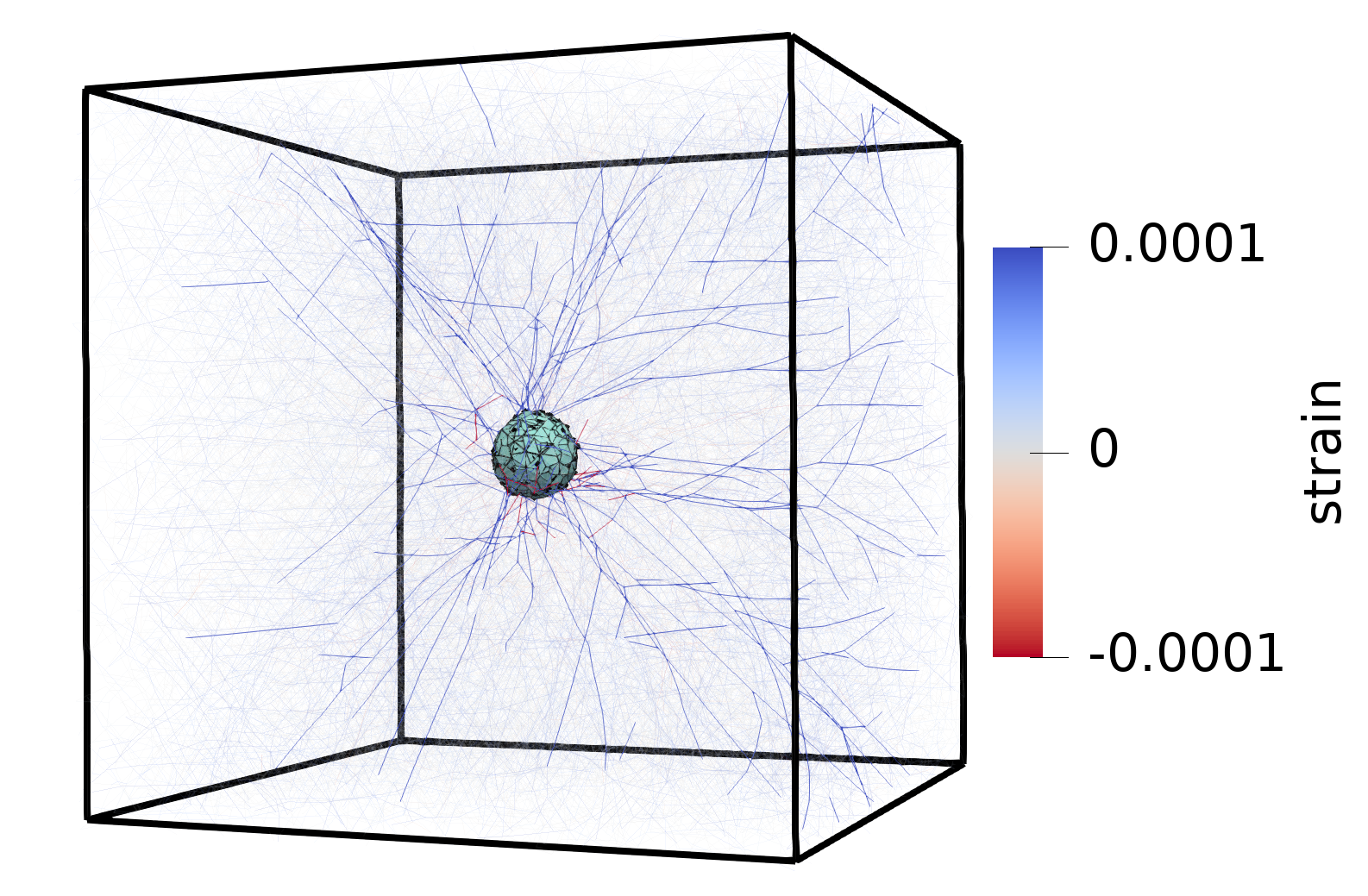}  
            \includegraphics[width=0.42\textwidth]{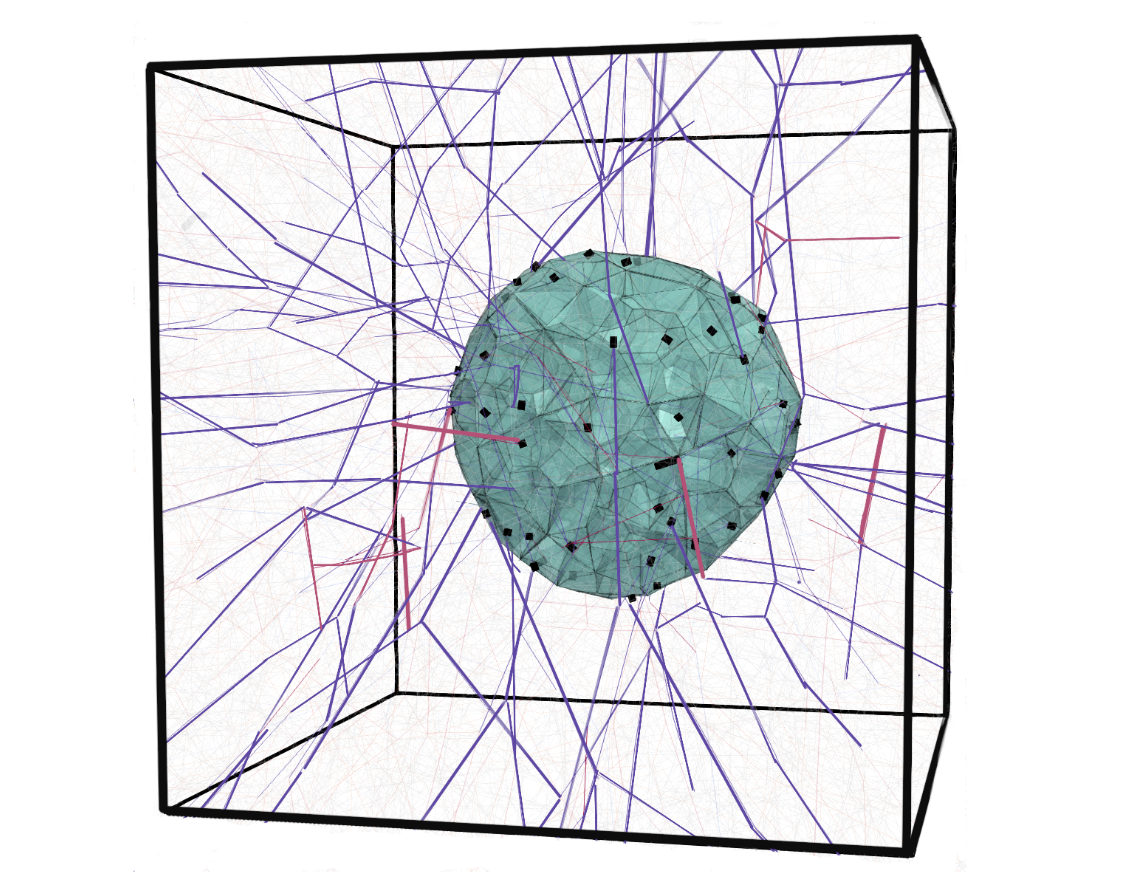}
             \label{fig: simulation_snapshots}
         \caption{{\it Integrating a 3D vertex model of a cell spheroid with a fiber-network model to study cell--ECM interactions. }  Left: Full system at the final simulation time, i.e., time $t = t_f$ and Right: Zoom in of the spheroid. The black denote active linker springs coupling the cells to the fibers. }
\end{figure*}
A recent computational model that couples a three-dimensional vertex model to a three-dimensional fiber network via active linker springs provides such a starting point (see Fig. 1)~\cite{Zhang2022,zhang2025}. With this starting point, both the rheology of the spheroid and the fiber network stiffness affect the remodeling of the fiber network. Specifically, fluid-like spheroids densify and radially realign the fiber network more on average than solid-like spheroids for a range of intermediate fiber network stiffnesses. Their predictions were supported by experimental studies comparing non-tumorigenic MCF10A spheroids and malignant MDA-MB-231 spheroids embedded in collagen networks. The spheroid rheology-dependent effects are the result of cellular motility generating spheroid shape fluctuations. These shape fluctuations lead to emergent feedback between the spheroid and the fiber network to further remodel the fiber network. This emergent feedback occurs only at intermediate fiber network stiffness. At low fiber network stiffness, the mechanical response of the coupled system is dominated by the spheroid, while at high fiber network stiffness, the mechanical response is dominated by the fiber network. They, therefore, quantified the notion of optimal spheroid-fiber network mechanical reciprocity. 

Using this computational model as a foundation, we now explore the role of cell stresses, in addition to cell shape. We anticipate an important role for cell stress particularly in cells invading the surrounding fiber network, which we will call from now on cell breakout. To calculate cell stress in three dimensions, we extend for the first time an earlier two-dimensional cell stress tensor formulation in bulk tissue \cite{nestor2018}. With this cell stress tensor we can determine how the cell stresses are distributed in the embedded spheroid and ask: How does this distribution change with the fluidity of the embedded spheroid? Moreover, since fluid-like spheroids remodel the fiber network more strongly than solid-like spheroids, one would expect fluid-like cells to be more likely to break out of the spheroid, and yet fluid-like cells are very low stress cells. Such cells would not be able to remodel the fibers as they moved along them, should they break away from the spheroid. And what mechanisms distinguish single cell breakout from multiple cell breakout in the form of cell streaming, i.e. one cell following another? 

To begin to answer these questions, we first briefly revisit the computational model, then present the cell stress tensor and use it to compute the maximum shear stress, a scalar quantity. We will also compute another quantity, the cell shape anisotropy based on the conventional radius of gyration tensor. We will look for correlations in these quantities within the spheroid as well as any spatial patterning of the quantities within the spheroid. Finally, we will use this information to answer the seeming paradox of low stress fluid cells breaking out of the spheroid (since fluid-like spheroids can remodel the fiber network more than solid-like spheroids) and a new mechanism for single cell breakout versus cell streaming breakout. We will do so by introducing an extended vertex model that treats cell–cell adhesion in an explicit but coarse-grained manner. We Details of the three-dimensional computational model, including the new extended vertex model version, and the definitions of the cell stress tensor and cell shape measures are provided in the Supplemental Material.

\begin{figure*}[t]
         \centering
         \begin{subfigure}[b]{0.31\textwidth}
             \centering
             \includegraphics[width=\textwidth]{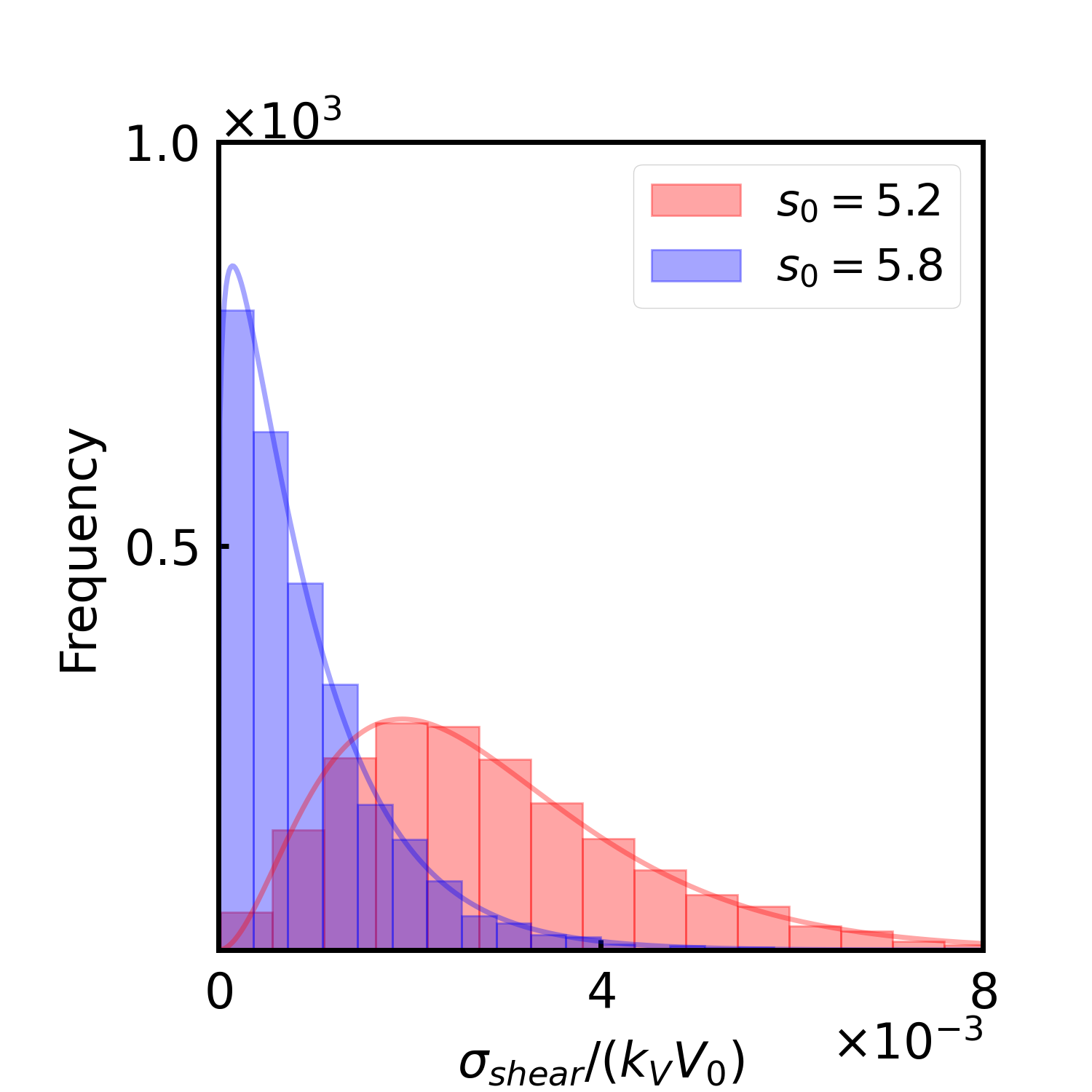}
             \caption{}
             \label{fig: simulation_close_up}
         \end{subfigure}
         \begin{subfigure}[b]{0.31\textwidth}
             \centering
             \includegraphics[width=\textwidth]{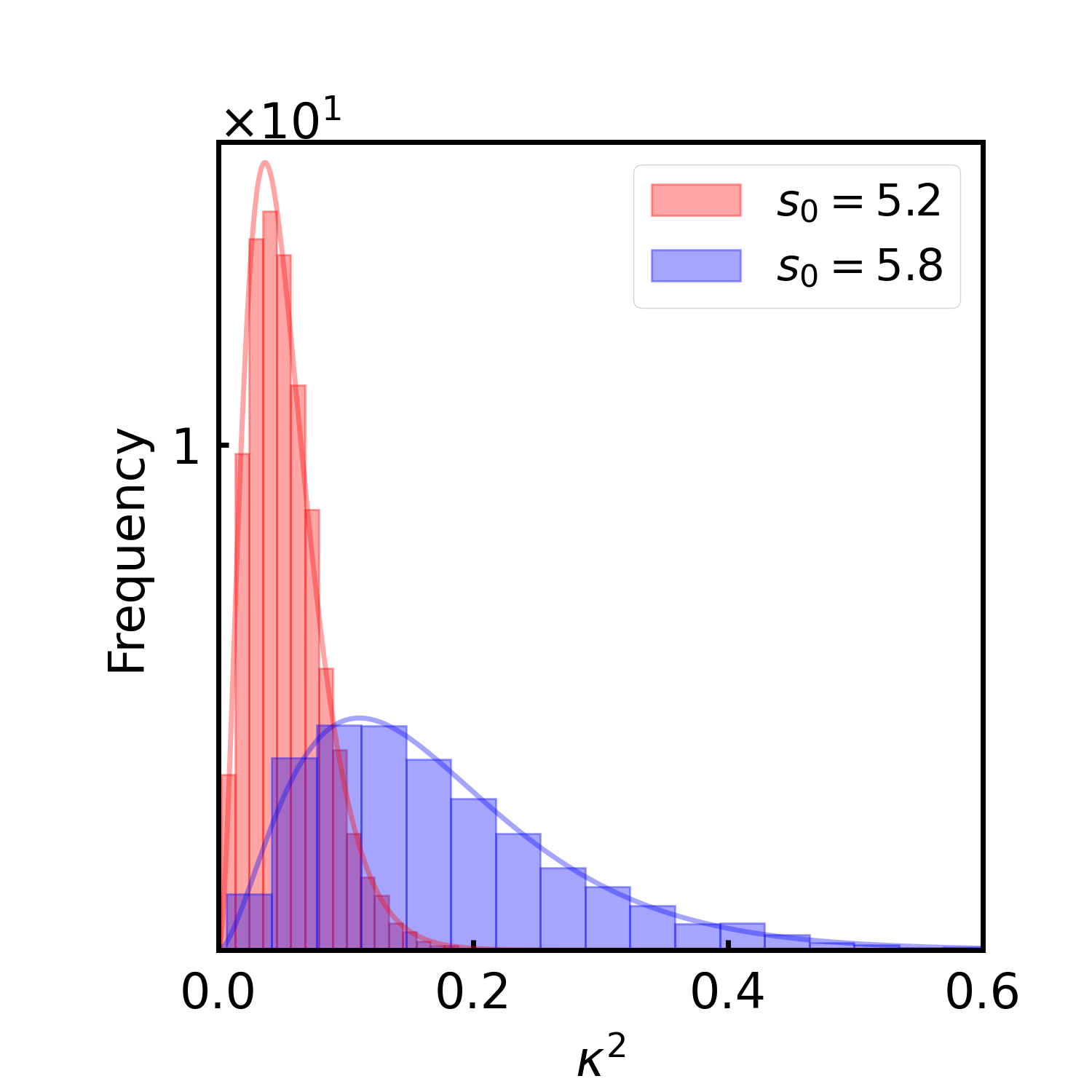}
             \caption{}
             \label{fig: cell_shape_anisotropy_histogram}
         \end{subfigure}        
         \begin{subfigure}[b]{0.31\textwidth}
             \centering
             \includegraphics[width=\textwidth]{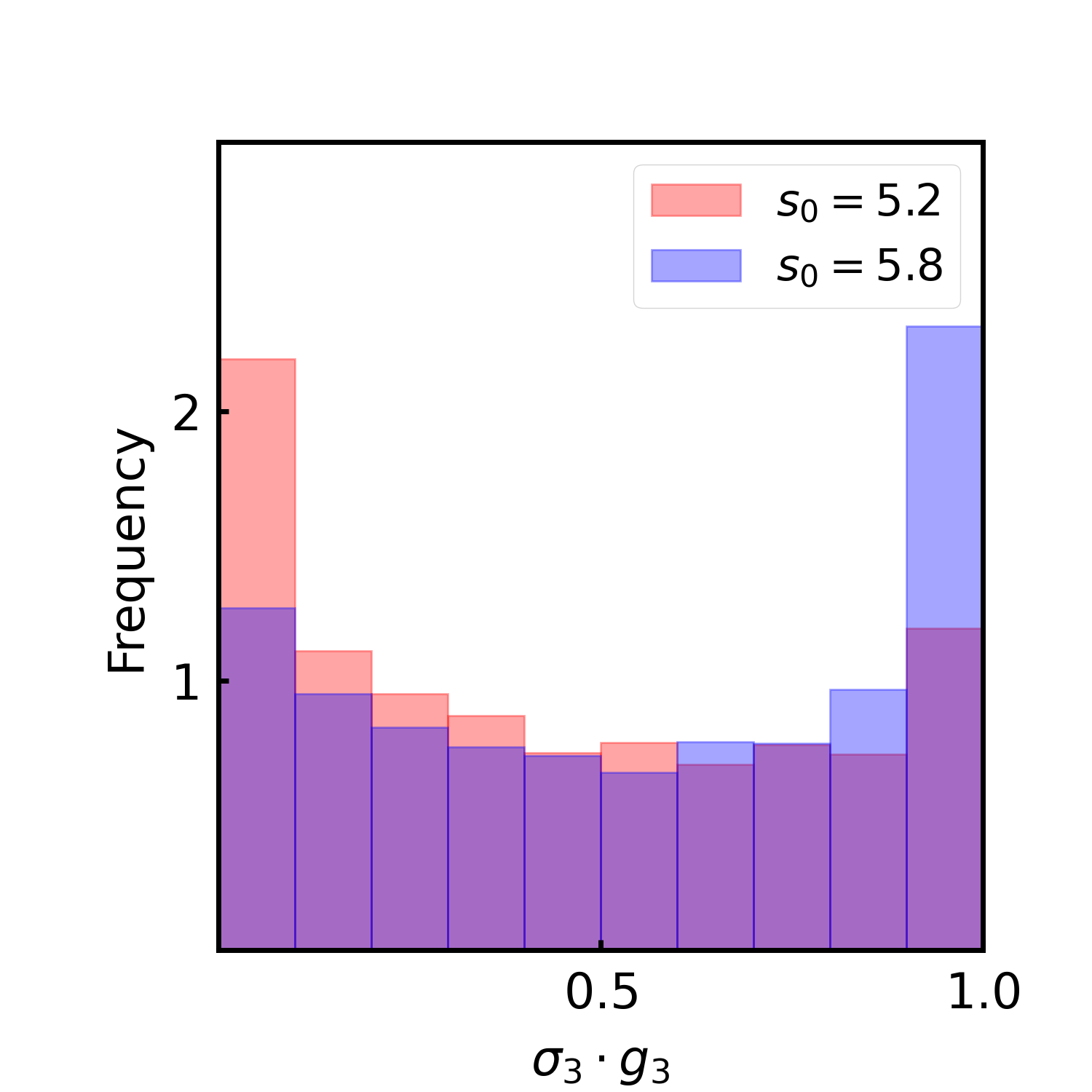}
             \caption{}
             \label{fig: stress_shape_overlap}
         \end{subfigure}
         \centering
         \caption{{\it Distributions of cellular shear stress, cell shape, and stress--shape correlation for solid-like and fluid-like spheroids at fiber-network occupation probability $p=0.8$.}
 (a) Histogram of the  maximum shear stress for the cells for two different target $s_0$s. Gamma fit parameters: $s_0 = 5.2$: $\alpha = 3.06, \theta = 9.34\times10^{-4}$; $s_0 = 5.8$: $\alpha = 1.19, \theta = 7.70\times10^{-4}$. (b) Histogram of the cell shape anisotropy for the cells for two different target $s_0$s. Gamma fit parameters: $s_0 = 5.2$: $\alpha = 3.20, \theta = 1.66\times10^{-2}$; $s_0 = 5.8$: $\alpha = 2.78, \theta = 6.21\times10^{-2}$.  (c) The overlap, or dot product, between the eigenvector associated with the largest maximum shear stress and the eigenvector associated with the largest gyration eigenvalue.  }
\end{figure*}

\section{Results}

Our results are based on a 3D Vertex Model representing the spheroid embedded in a disordered fibrous network mimicking the collagen matrix. Cells in the spheroid are coupled to the fiber network by active linker springs that contract over time. The construction of this coupled spheroid–matrix system, including energy functionals, the numerical integration procedure, and parameter choices, is described in detail in the 3D Computational Model subsection of the Supplemental Material.

\subsection{Maximum shear stress $\sigma_{\rm shear}$ and its correlation with cell shape anisotropy $\kappa^2$}

For each polyhedral cell, we compute the Cauchy stress tensor $\boldsymbol{\sigma}$ and define the \emph{maximum shear stress}
$\sigma_{\rm shear} = (\sigma_3-\sigma_1)/2$, where $\sigma_1 \le \sigma_2 \le \sigma_3$ are the principal stresses (see Eq.~(\ref{maximum_shear_stress}) in the Supplemental Material). Cell shape is quantified by the gyration tensor $\boldsymbol{G}$ and the corresponding dimensionless \emph{cell shape anisotropy} $\kappa^2$, which ranges from 0 for nearly spherical cells to 1 for highly elongated cells (see Eq.~(\ref{shape_tensor}) and the discussion in the ``Cell Properties: Stress and Shape'' subsection of the Supplemental Material).

Figure~2(a) shows the distribution of $\sigma_{\rm shear}$ for solid-like spheroids ($s_0=5.2$) and fluid-like spheroids ($s_0=5.8$) embedded in a matrix with bond occupation probability $p=0.8$. Solid-like spheroids display a broader distribution with a larger mean $\sigma_{\rm shear}$ than fluid-like spheroids, consistent with higher internal stresses in the solid-like state. By construction, $\sigma_{\rm shear}\ge 0$. Moreover, earlier work showed that the energy-barrier-to-cell-rearrangement distributions in a two-dimensional version of this vertex model were well-characterized by a Gamma function distribution,~\cite{Bi_2014}. Therefore, to better quantify the differences in the maximum shear stress distribution for different system parameters, namely $s_0$, the target cell shape index and $p$, the fiber network occupation probability, which also encodes network stiffness, we fit the histograms to Gamma function distributions. The fit parameters are reported in the caption of Fig.~2.

Within this framework, the shape parameter $\alpha$ provides a measure of the degree of mechanical heterogeneity in the cell population, while the scale parameter $\theta$ sets the characteristic stress scale of individual contributions. Solid-like spheroids exhibit larger $\alpha$ values and broader distributions, indicating a wide spectrum of stress states and pronounced mechanical heterogeneity. This is consistent with the presence of spatial stress gradients and long-lived force chains in the solid-like regime. In contrast, fluid-like spheroids display smaller $\alpha$ values and narrower distributions, reflecting more homogeneous stress states arising from frequent cell rearrangements and efficient stress relaxation.

As the surrounding matrix is softened (e.g., $p=0.75$), the $\sigma_{\rm shear}$ distribution for solid-like spheroids does not change much from $p=0.8$. However, for isolated fluid-like spheroids without a fibrous network, there is additional maximum shear stress broadening. The corresponding histograms are shown in Figs.~S1 and S2 of the Supplemental Material. For intermediate ranges of matrix stiffness, the active pulling on the compliant matrix provides an additional pathway for stress relaxation. When the matrix is stiffened by increasing $p$ (or there is the matrix is floppy or absent, this relaxation pathway is suppressed and the $\sigma_{\rm shear}$ distribution broadens. Thus, the regime of optimal spheroid–matrix mechanical reciprocity coincides with the narrowest maximum shear stress distribution for the fluid-like spheroids. In contrast, fluid-like spheroids exhibit comparatively small $\sigma_{\rm shear}$ values over the full range of $p$s.

Looking to cell breakout, the Gamma-distributed nature of the maximum shear stress highlights that invasion-relevant stresses arise from the tails of the distribution rather than its mean. While fluid-like spheroids are characterized by lower average stresses, the exponential tail of the Gamma distribution allows for rare but mechanically significant high-stress cells. These cells become especially relevant at the spheroid boundary, where elongation-induced strain stiffening can dramatically amplify maximum shear stress and enabling force transmission into the surrounding fiber network. Thus, the Gamma fits quantitatively capture how rare stress fluctuations, rather than bulk stress levels, presumably govern the onset of cell breakout.

Let us now further characterize the cells by their shape anisotropy. The distributions of $\kappa^2$ for the same cells are plotted in Fig.~2(b). Fluid-like spheroids ($s_0=5.8$) exhibit broader $\kappa^2$ distributions with larger mean anisotropy than solid-like spheroids, consistent with their larger cell shape index. To characterize these distributions, we find that the Gamma distribution also approximates well the shape anisotropy distribution. 

To quantify the overlap between stress and shape, we compute for each cell the dot product between the eigenvector associated with the largest principal stress $\sigma_3$ and the eigenvector associated with the largest gyration eigenvalue $g_3$, i.e., the long axis of the cell. The resulting overlap distribution is shown in Fig.~2(c). For solid-like spheroids, the distribution peaks near zero, indicating little correlation between the directions of maximal stress and maximal elongation; highly stressed cells are not necessarily less globular. Indeed, a stress tensor does not necessarily give information about shape and vice versa. By contrast, fluid-like spheroids display a distribution biased toward larger overlaps, demonstrating that the relatively rare high-stress cells in the fluid-like regime tend to be more elongated and that their principal stress aligns with their long axis.

\begin{figure*}[t]
         \centering
         \begin{subfigure}[b]{0.33\textwidth}
             \centering
             \includegraphics[width=\textwidth]{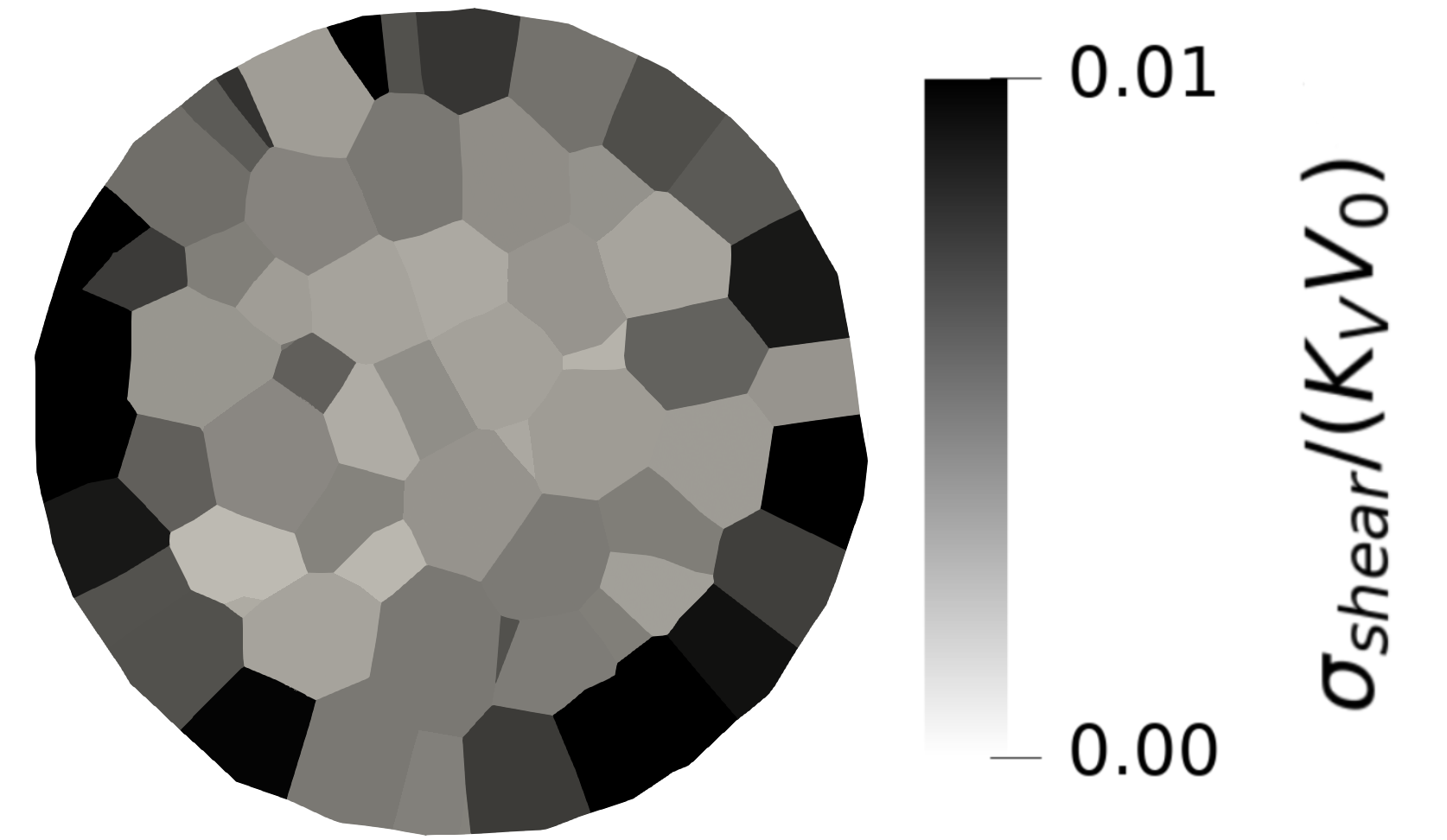}
             \caption{}
             \label{fig: shear_cross_section}
         \end{subfigure}
         \begin{subfigure}[b]{0.33\textwidth}
             \centering
             \includegraphics[width=\textwidth]{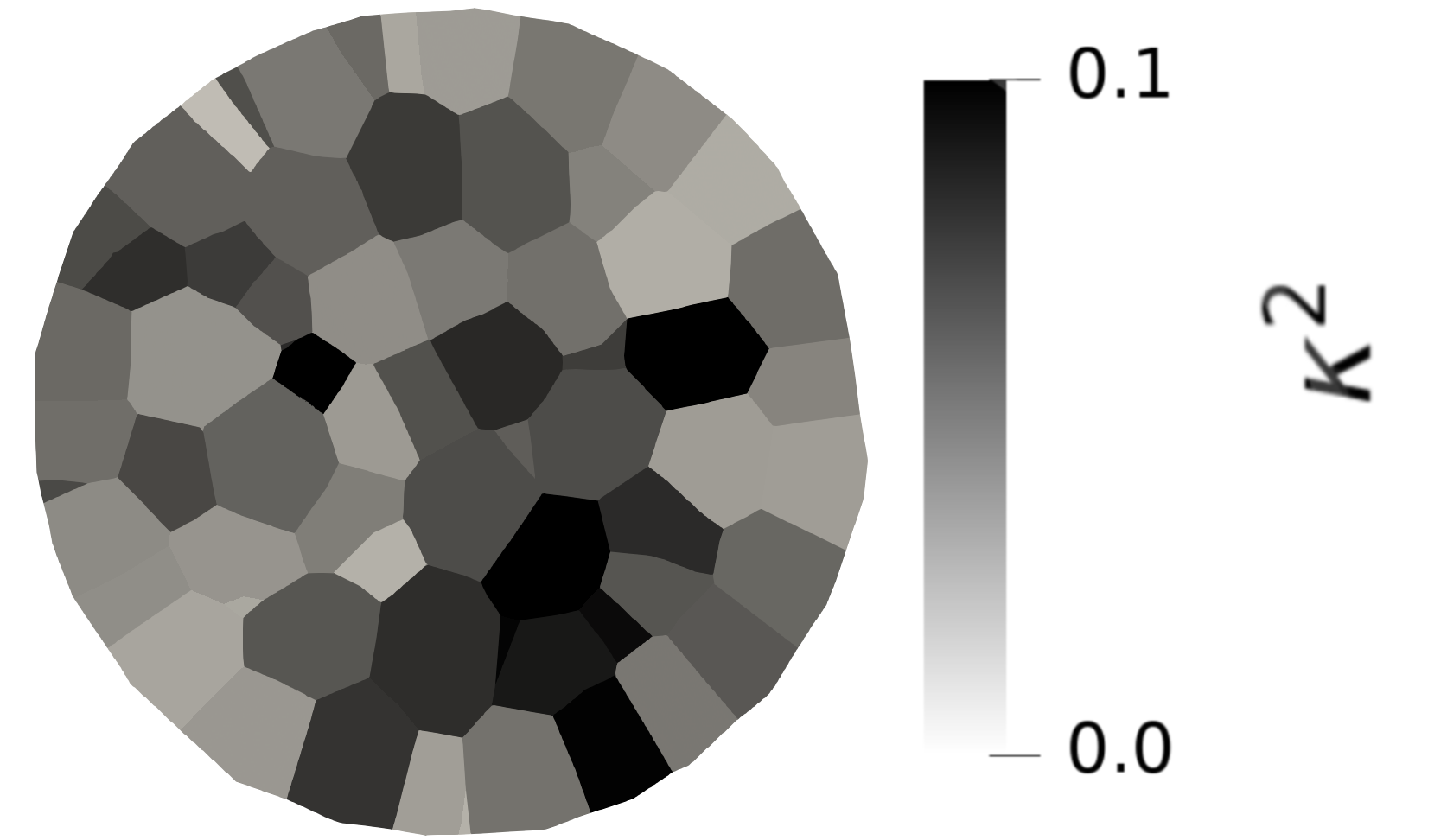}
             \caption{}
             \label{fig: anisotropy_cross_section}
         \end{subfigure}        
         \begin{subfigure}[b]{0.3\textwidth}
             \centering
             \includegraphics[width=\textwidth]{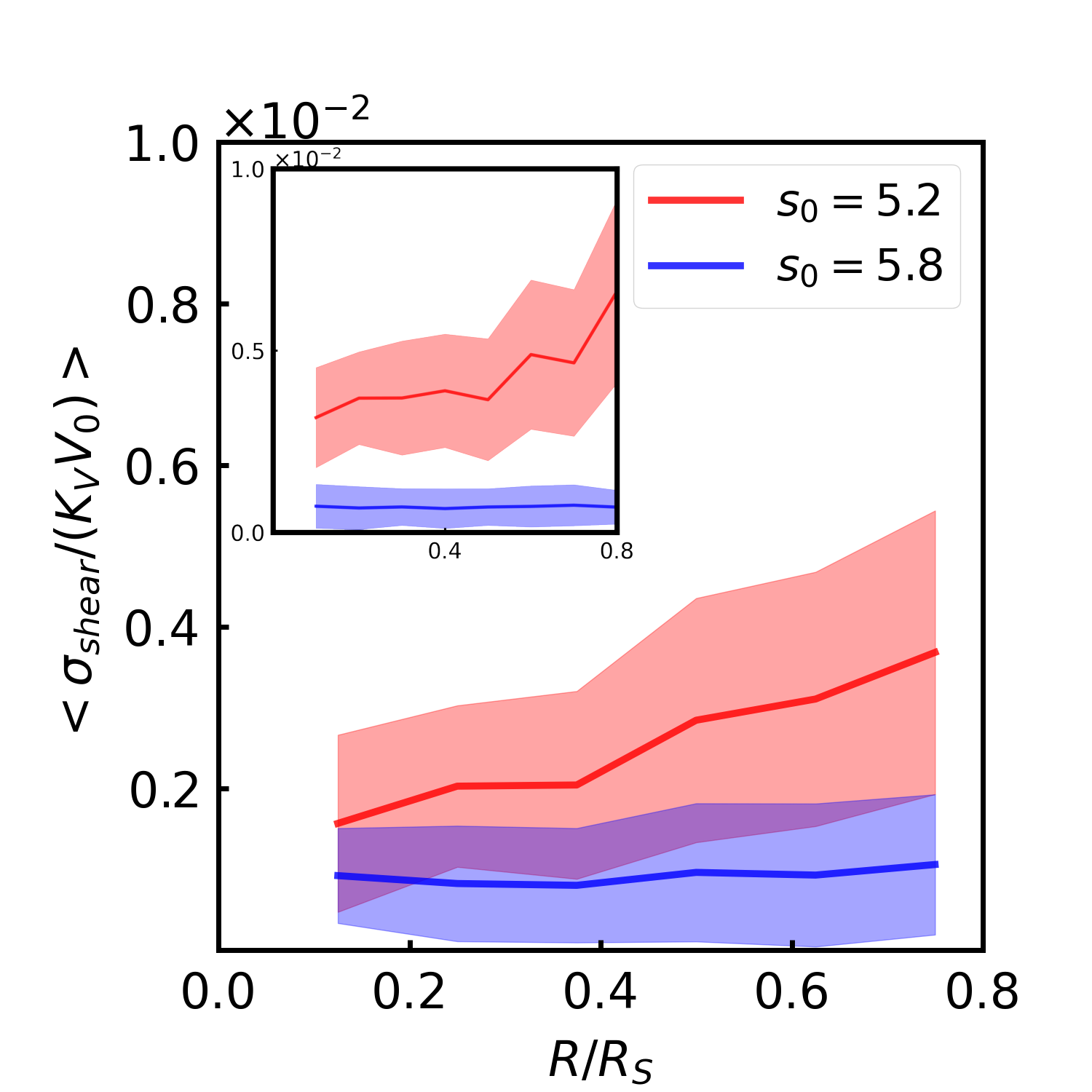}
             \caption{}
             \label{fig: radial_distribution_stress}
         \end{subfigure}
         \caption{{\it Spatial distribution of cellular stresses and cellular shape in solid-like spheroids.} (a) Sample cross-section of the cellular maximum shear stress. (b) Sample cross-section of the cellular shape anisotropy. (c) Maximum shear stress for both fluid-like and solid-like spheroids as a function of radial distance from the center of mass of the spheroid. The inset shows the result for  $p=0$, i.e., the non-embedded spheroid with no surrounding fiber network. }
\end{figure*}

\subsection{Spatial patterning of cell stress in solid-like spheroids}

We next examine how cell stresses and shapes are organized within the
spheroid. For solid-like spheroids ($s_0=5.2$), a central cross-section
shows that the maximum shear stress $\sigma_{\rm shear}$ is lowest
near the center and increases toward the boundary [Fig.~3(a)]. The
corresponding cross-section of the cell shape anisotropy $\kappa^2$
reveals that interior cells tend to be more anisotropic than boundary
cells [Fig.~3(b)]. Radial averaging confirms a monotonic increase of the
mean $\langle\sigma_{\rm shear}\rangle$ with distance from the spheroid
center [Fig.~3(c)]. This stress differential could lead to interesting,
nontrivial effects within the bulk of the spheroid, with the inner cells
being closer to fluidization than the cells near the boundary.

In contrast, for fluid-like spheroids ($s_0=5.8$) the radial profile
$\langle\sigma_{\rm shear}\rangle(R)$ is much flatter [Fig.~3(c)],
indicating the absence of a pronounced bulk–boundary gradient. Cross-
sectional maps of both $\sigma_{\rm shear}$ and $\kappa^2$ likewise show
no obvious spatial patterning.

The emergence of a radial stress gradient in solid-like spheroids can be understood as a consequence of constrained stress redistribution in a mechanically rigid collective. In the solid-like regime, limited cell rearrangements inhibit stress relaxation in the bulk, causing stresses generated by cell–ECM coupling and active contractility to accumulate preferentially near the boundary. Interior cells, shielded from direct ECM interactions and unable to reorganize efficiently, remain comparatively low stress, while boundary cells bear the mechanical load required to balance spheroid–matrix forces. This mechanism is absent in fluid-like spheroids, where frequent neighbor exchanges allow stresses to be redistributed and homogenized throughout the tissue.

\begin{figure*}[t]\label{strained_cell_comparisons}
         \centering
         \begin{subfigure}[b]{0.2\textwidth}
             \centering
             \includegraphics[width=\textwidth]{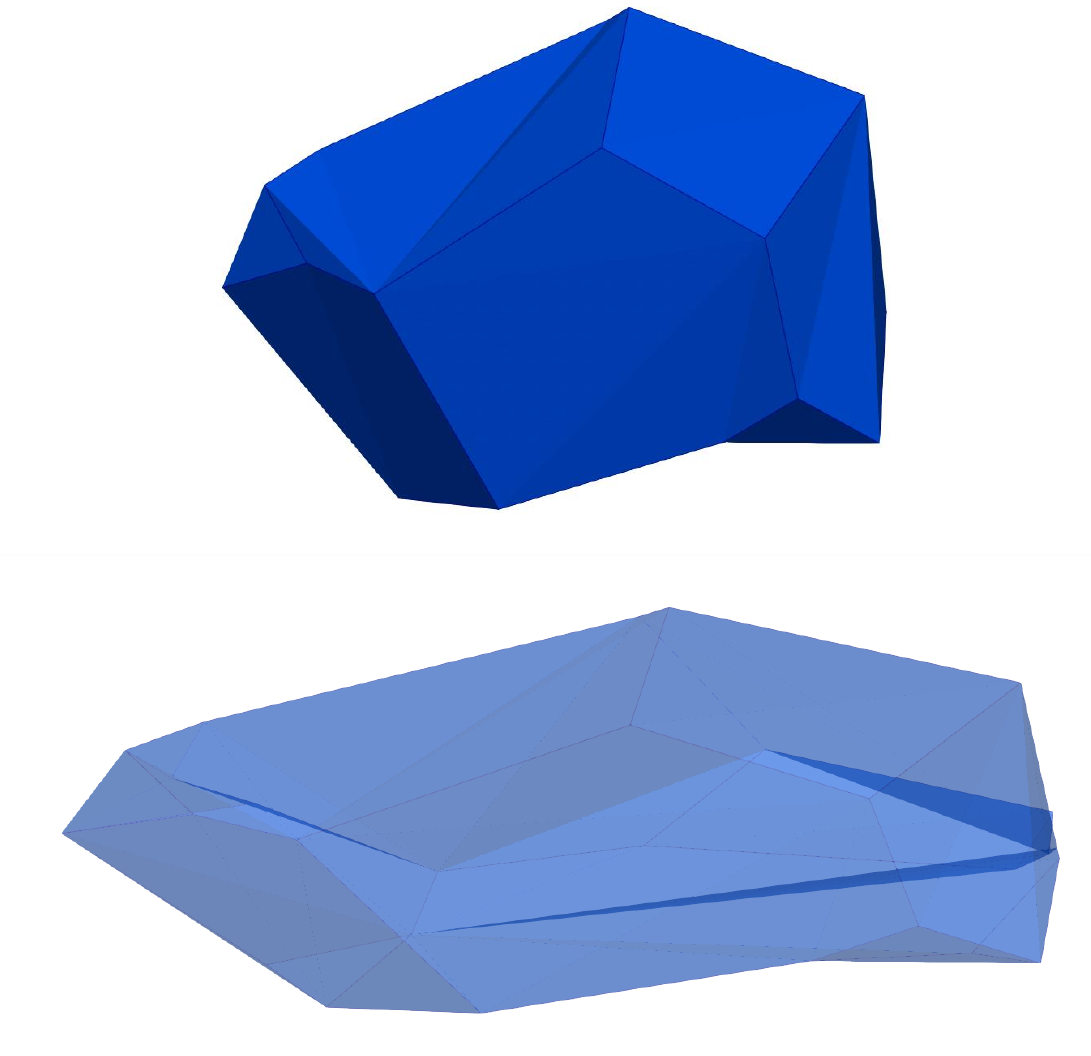}
             \caption{}
             \label{fig: final_cell}
         \end{subfigure}  
         \begin{subfigure}[b]{0.26\textwidth}
             \centering
             \includegraphics[width=\textwidth]{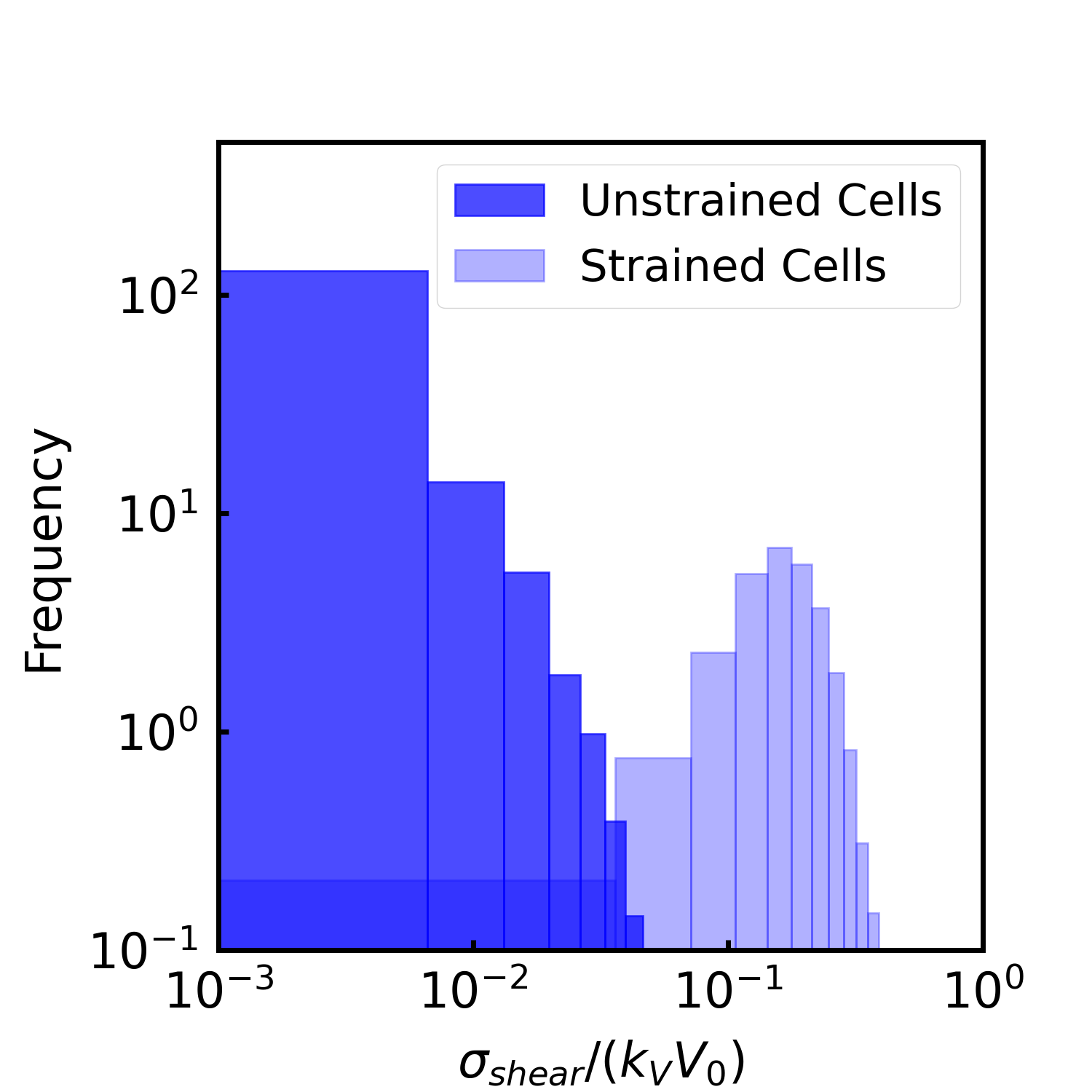}
             \caption{}
             \label{fig: single_cell_overlap_5.2}
         \end{subfigure}
         \begin{subfigure}[b]{0.26\textwidth}
             \centering
             \includegraphics[width=\textwidth]{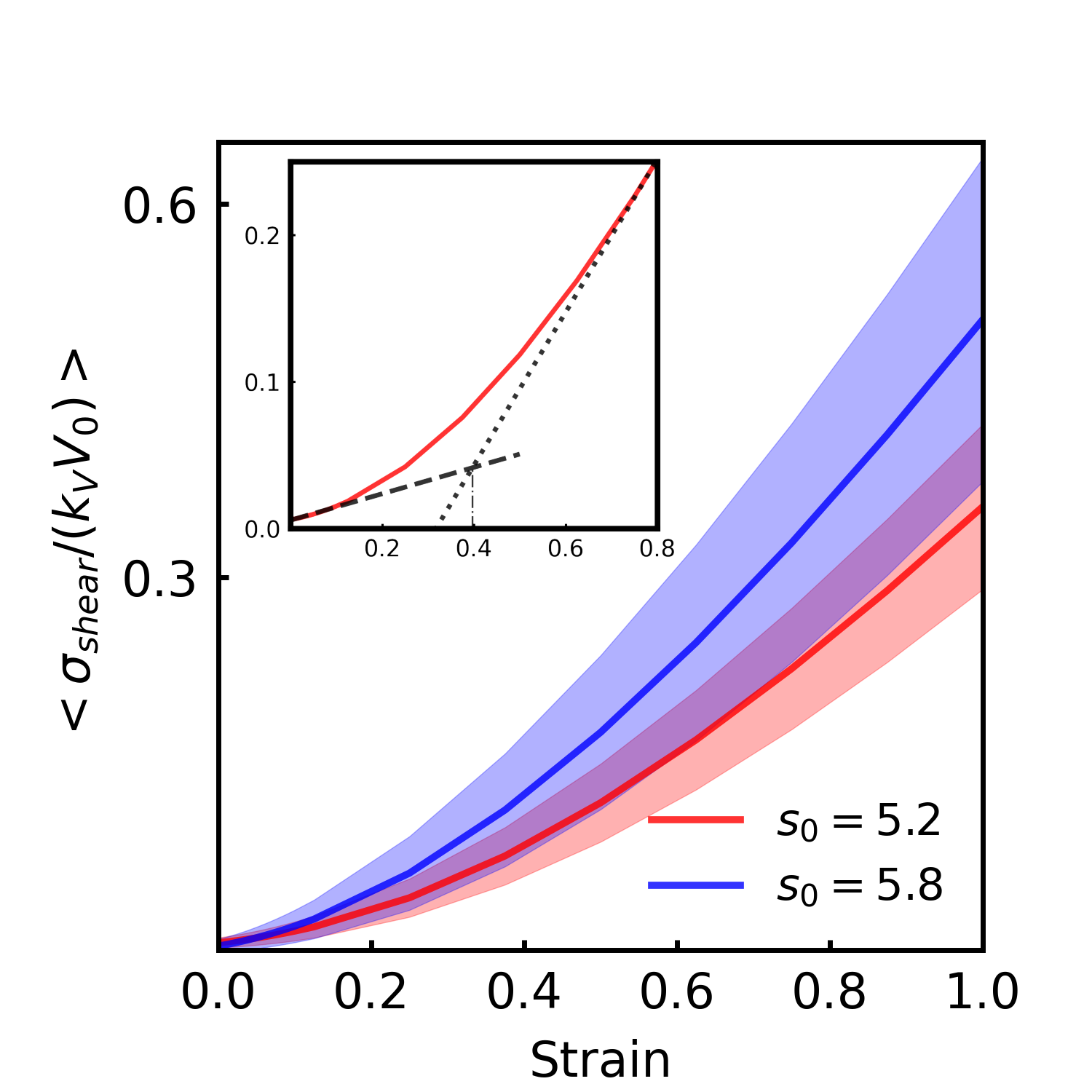}
              \caption{}
         \end{subfigure}
         \begin{subfigure}[b]{0.26\textwidth}
             \centering
             \includegraphics[width=\textwidth]{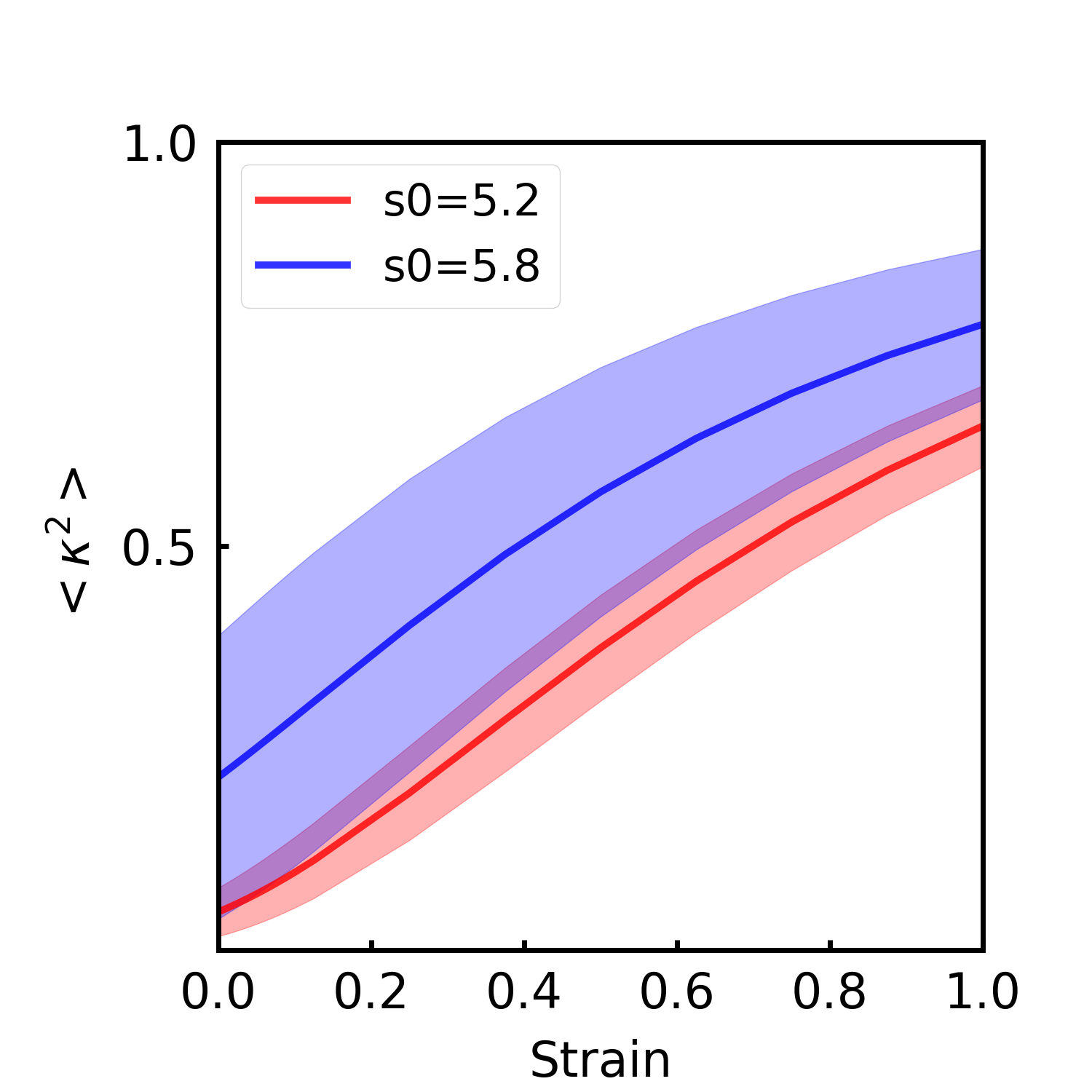}
              \caption{}
         \end{subfigure}
         \caption{{\it Cellular strain stiffening for volume-preserving deformations}. (a) Initial cell configuration and a strained cell configuration for cells from a fluid-like spheroid. (b) Distribution of maximum bulk shear stress for the initial cells (dark blue) and for the strained cells (light blue) for $s_0=5.8$. (c) The average maximum shear stress versus strain demonstrating a nonlinear relationship for both types of the spheroids and, thus, exhibiting strain stiffening phenomena. The inset shows a deviation from linear behavior around 0.1 strain, yielding an estimate of the onset of strain stiffening. By linearly fitting the low and high strain parts of the stress-strain curve, the intersection of the two fits yields a second estimate for the crossover between low and high strain behavior at approximately 0.4 (dashed vertical line).   (d) The average cell shape anisotropy versus strain.}
\end{figure*}

\subsection{Cell strain-stiffening with volume-preserving deformations}

Prior work demonstrates that fluid-like spheroids remodel the fiber network more so than solid-like spheroids~\cite{zhang2025}. Presumably, the enhanced remodeling leads to greater cell breakout potential for the fluid-like spheroids.  If so, the maximum shear stresses are much lower in the fluid-like case.  As stresses are needed to remodel the ECM as cells move along the fibers, how then can fluid-like cells generate enough stress to be able to do so, at least for mesenchymal-type motility?  To explain this conundrum of fluid-like cells moving out into the 3D fiber network, we investigate what happens when cells stretch themselves out, which is more likely for cells at the boundary of the spheroid as opposed to cells in the bulk. 
Specifically, we impose the following deformation: $\mathbf{r}_i' = U \, \lambda \, U^{T}\, \mathbf{r}_i$, where $U^{T} G = \operatorname{diag}(g_1, g_2, g_3)$ and $\lambda = \operatorname{diag}\!\left(\frac{1}{\sqrt{\alpha}},\,\frac{1}{\sqrt{\alpha}},\,\alpha\right)$. This deformation preserves cell volume. 

In Figure 4 we compare cells for both rigid and fluid spheroids with states at the end of simulation ($t=t_f$) to the same cells strained by $\alpha = 0.5$ according to this volume-preserving deformation. As individual cells are uni-axially strained, we indeed find an increase in maximum shear stress for cells from both fluid-like and solid-like spheroids. For fluid-like cells, the increase in stress is greater as their initial stress is typically smaller (as compared to solid-like cells). Moreover, the maximum shear stress eventually increases nonlinearly with strain.  This is clear evidence for cellular strain-stiffening with the linear stress-strain curve at small strain deviating from linear around 0.1 strain. Moreover, a second estimate of 0.4 strain characterizing the crossover from lower strain to higher strain behavior is found (See inset of Fig. 3c). Intriguingly, strain stiffening has been observed in numerous cell stretching experiments~\cite{Fernandez2008,IcardArcizet2008,Kollmannsberger2011,Zink2012} and even in cell compression studies~\cite{Gandikota2020}. Moreover, collagen matrices also exhibit strain stiffening~\cite{Storm2005,Cheung20243D}. We now observe this phenomenon in an individual vertex model cell long harkening back to the statement that vertex model are, without the rearrangements, many body spring networks beyond the usual three-body angular interaction found in fiber networks and so we expect to observe this phenomenon even at the single cell level with its rather minimal multi-body spring network~\cite{Bi2015}.  This strain stiffening can be enhanced by internal cytoskeletal machinery being developed in the form of acto-myosin stress fibers that span the extent of the cell and can help drive mesenchymal motion along fibers~\cite{Lopez2014,Mayett2017}.

Cells along the boundary of the spheroid can much more readily extend themselves to strain-stiffening than cells in the bulk of the spheroid. If a cell sufficiently strains itself to at least match the local stiffness of the fibers it is interacting with, combined with a decrease in cell-cell adhesion, then it can readily break move out from the spheroid by crawling along the collagen fibers and remodel them as it does so.

\subsection{Single cell versus multi-cell breakout and an extended vertex model }

From a collective perspective, prior work has viewed cell breakout as an unjamming phenomenon~\cite{oswald2017,gottheil2023,Parker2020,Parker2024,cai2025}. More recent work has broken down the unjamming phenomenon into different modes in terms of an active fluid-like unjamming for multiple cells breaking out or an unjamming directly to an active gas-like phase in terms of single cell breakout\cite{Chepizhko2021}. Another study distinguishes between an active fluid and an active nematic, where the cells are aligned as they flow~\cite{Ilina2020}. Parallel cell streams are a cellular realization of an active nematic, though in the MEFs the cellular streams appear to be individual cell streams (Fig. \ref{fig:cell-breakout-results}(a))~\cite{Ho2024}. There may be other types of cell breakout modes~\cite{Friedl2000,NguyenNgoc2012}. The spheroid environment and the cell-cell adhesion both together determine with breakout mode. For instance, cell types with high cell-cell adhesion and low compressive stress from the environment drives collective cell breakout, as opposed to individual cell breakout.

Combining our cellular strain stiffening finding and the observation of either single cell or single-file cell streaming in the MEF spheroids, we ask the
question: {\it What are the minimal mechanical ingredients driving different cell breakout modes?} To answer this question, we introduce a simplified “extended” vertex model in which cell–cell interactions are represented by tunable adhesion springs. This model is not intended to resolve molecular adhesion dynamics, but rather to test how effective adhesion strength and anisotropy influence mechanically driven escape from the spheroid. In this  extended vertex model, cells no longer share interfaces but instead interact via explicit, tunable cell–cell adhesion springs. The detailed construction of this extended model, including cell-cell adhesion springs with stiffness $k_{cc}$, spheroid initialization, and fiber connections, is given in the Supplemental Material. Given now explicit cell-cell adhesions with this extended vertex model, there is prior work studying anisotropy in cell-cell adhesion. For instance, E-cadherin complexes recruit mechanosensitive proteins under tensile loading, leading to polarized adhesion reinforcement~\cite{leDuc2010}. And in mesenchymal or partially mesenchymal cells, stress fibers align along the long axis, and cadherin-mediated adhesions preferentially stabilize at the ends of those fibers~\cite{Mertz2013,Engl2014}. Motivated by these anisotropic cell-cell adhesion findings, we will explore its implications for different types of cell breakout modes.

For a range of $k_{cc}$s, we find that fibers pulling on one boundary cell does not lead to single cell breakout. However, decreasing $k_{cc}$ for the cell-cell adhesion springs associated with the boundary cell being pulled does result in single cell breakout. Specifically, as $k_{cc}$ associated with the boundary cell is decreased 10-fold for every 5 percent strain decrease in the fiber target spring length, we observe the boundary cell inner edge clearly moving away from the closest center cell edge. This reduction represents a coarse-grained weakening of effective adhesion under strain, allowing us to test when intercellular cohesion yields to fiber-mediated forces as the cell is being pulled out. For now, we have decreased all of the associated boundary cell cell-cell adhesion springs and not yet incorporated the anisotropy in cell-cell adhesion. The reduction allows for the pulling force of the fibers to dominate over the cell-cell adhesion, as expected. Obviously, this crossover depends on the ratio of the pulling force to the cell-cell adhesion forces in the direction opposing the pulling. Moreover, the internal floppiness of the cell itself also contributes. Should the boundary cell remain floppy, then it can absorb the pulling force without itself moving away from the other cells in the spheroid by simply deforming with little energy cost. However, given the strain-induced stress transition, eventually the cell tensions and so begins to move away from the other cells. In other words, the fibers pulling, the boundary cell, and the other cell-cell adhesion springs can each be represented each as a spring, connected in series and being pulled in two opposing directions. Of course, given the internal cytoskeletal components of the cell, again, the cell drives its own stress transition by assembling stress fibers more prominently along the long axis of the cell to tension as discussed in the previous subsection and the boundary cell stiffens and ultimately moves away from the spheroid should the fiber pulling force dominate. 


\begin{figure*}[htbp]
    \centering
        \includegraphics[width=\textwidth]{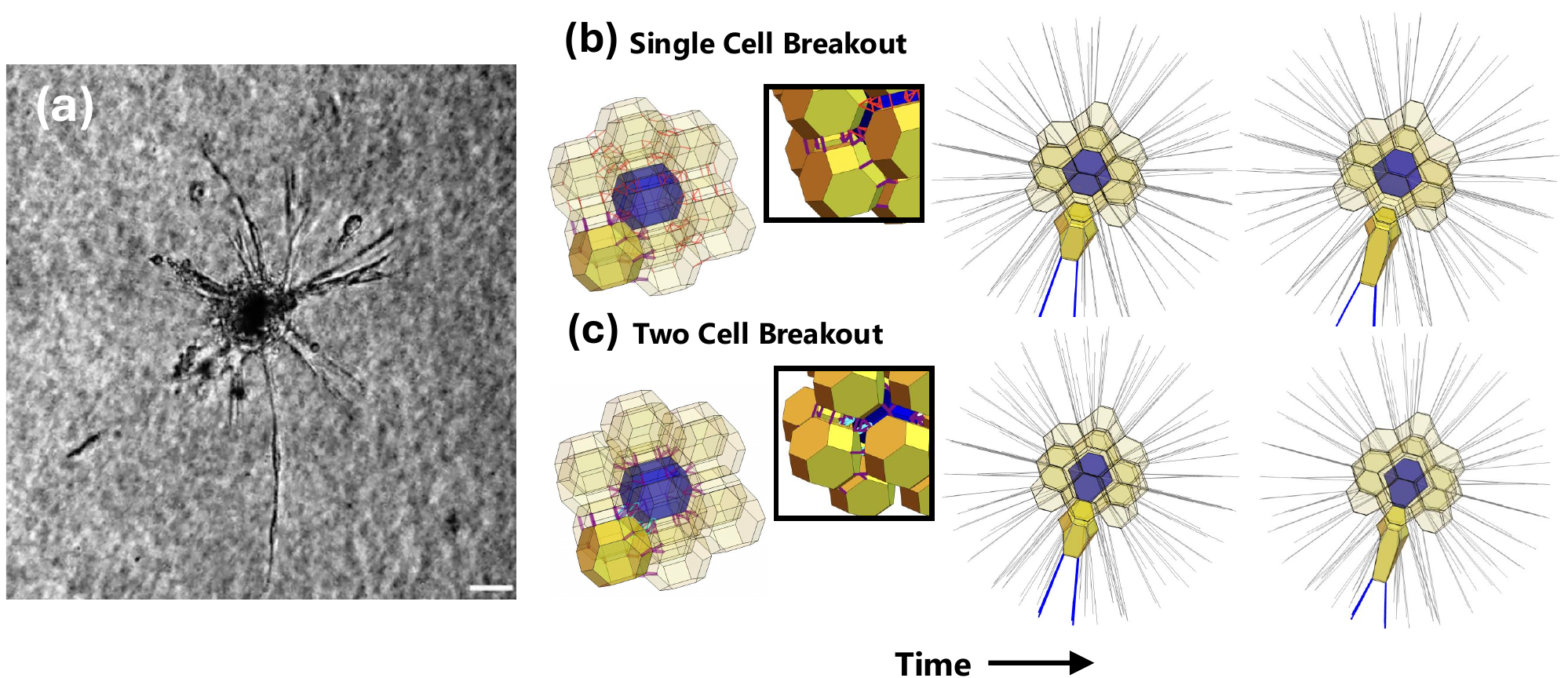}
    \captionsetup{justification=raggedright,singlelinecheck=false}
    \caption{\textit{ Different spheroid cell breakout modes.} (a) Experiment with MEF spheroid in 1.5 mg/ml collagen I. matrix. (b) Single-cell breakout as the target spring length of the four fibers attached to the deep yellow cell decreases. The thicker purple cell-cell adhesions are weaker than the thinner red ones. (c) Two-cell breakout as the target spring length of the four fibers attached to the deep yellow cell decreases. The cyan denotes strong anisotropic cell-cell adhesion springs between the leader boundary dark yellow cell and the center blue cell that is becoming elongated in the dark yellow cell extension to ultimately follow it.  }
    \small\raggedright
    \label{fig:cell-breakout-results}
\end{figure*}

How does the anisotropic cell-cell adhesion come in play? Given our mechanism for single cell breakout, to study the potential for two-cell breakout, we decrease the cell-cell adhesion for both the boundary cell and the center cell as the boundary cell is being pulled by the associated fibers. Interestingly, we do not observe two-cell breakout in this scenario as the boundary cell pulls away from the center cell as before. However, as cells become strained, and thereby exhibiting stress transitions, we therefore introduce a phenomenological rule in which effective cell–cell adhesion is strengthened along the principal elongation axis of a strained cell and weakened orthogonally. This rule captures the mechanical effect of polarized adhesion reinforcement without explicitly modeling molecular adhesion dynamics. In other words, more stress fibers along the long axis of the cell allow for more focal adhesions for the part of the cell in contact with the fibers.  In order to stabilize the cell stress fibers, there needs to be anchoring on the other side of the cell that remains in contact with the spheroid cells. We argue that as cells undergo strain-inducing stress transitions their cell-cell adhesions become weaker in the direction perpendicular to the long cell axis and strengthened in the direction parallel to the long cell axis, as has been found in individual mesenchymal cells. 

In the single cell breakout scenario, there is not as significant a stress transition so that the anisotropic cell-cell adhesion effect is not as dominant. However, in the two cell breakout scenario, there is a more significant stress transition and so this predicted anisotropic cell-cell adhesion effect is more pronounced. When this adhesion anisotropy is introduced, we observe stable two-cell breakout in which a follower blue cell trails the dark yellow leader. This demonstrates that directional adhesion reinforcement is a sufficient mechanical ingredient for streaming-like invasion in this minimal model. Given our results, we predict that the cells in the stream are more strained than single breakout cells, which distinguishes between the two modes of breakout.

\section{Discussion}

Fluid-like and solid-like spheroids differ not simply in how much stress they store internally, but in how stresses are exchanged with the surrounding extracellular matrix. Fluid-like spheroids are the most effective at remodeling collagen as stress relaxation through cell rearrangements drives spheroid shape fluctuations that enable repeated mechanical coupling and emergent reciprocity with the fiber network, facilitating outward stress transmission rather than internal stress storage~\cite{zhang2025}. Solid-like spheroids, in contrast, develop pronounced radial gradients in maximum shear stress, with elevated stresses localized near the boundary and lower stresses in the interior with minimal pathways for stress relaxation within the spheroid. While this organization reflects a mechanically rigid spheroid architecture, it primarily corresponds to stress storage within the spheroid rather than efficient stress transfer to the surrounding fibers. As a result, solid-like spheroids remain comparatively limited in their ability to reorganize either cell–cell contacts or the extracellular matrix.

This distinction reframes a central paradox in embedded spheroid invasion. The spheroids that most strongly remodel the extracellular matrix and invade most efficiently are those in which cells occupy low-stress states on average and lack persistent spatial stress gradients. In principle, mesenchymal invasion requires cells to exert forces comparable to the stiffness of the surrounding collagen fibers in order to engage and remodel them, raising the question of how fluid-like tissues that rapidly dissipate internal stress nonetheless generate the forces required for invasion. Resolving this paradox requires shifting attention away from bulk-averaged stress measures and toward localized, transient stress states that emerge dynamically at the spheroid–matrix interface.

Our results show that this apparent contradiction is resolved at the level of individual cells through strain-induced stress amplification. Although fluid-like spheroids exhibit narrow distributions of maximum shear stress when averaged over the population, individual cells at the spheroid boundary can undergo large, nonlinear increases in stress when sufficiently elongated. This strain stiffening enables cells that are otherwise mechanically soft to transiently access high-stress states, allowing them to exert forces comparable to those of the surrounding collagen fibers. In this way, low-stress cells can locally become high-stress cells precisely when geometric constraints and matrix interactions demand it. Force generation is therefore shifted away from sustained stress accumulation in the bulk and concentrated instead in dynamically strained boundary cells, whose stress states are short-lived but mechanically consequential. Invasion emerges not from static stress localization, but from repeated, geometry-driven stress transitions triggered by cell elongation during migration.

Extending the formulation of the cell stress tensor to three-dimensional polyhedral cells further clarifies how geometry and mechanics interact in this process. In solid-like spheroids, cell shape anisotropy and stress are not generically aligned, indicating that elongation alone is an unreliable indicator of mechanical state. In fluid-like spheroids, however, the rare high-stress cells exhibit preferential alignment between elongation and maximal stress directions, consistent with strain-induced stress amplification. This decoupling of shape and stress highlights the limitations of using cell morphology as a proxy for force generation, particularly in three-dimensional tissues where geometric and mechanical constraints are more complex. Instead, it underscores the need for direct, stress-based metrics when interpreting spheroid mechanics and invasion potential. The three-dimensional stress formulation introduced here thus provides a quantitative bridge between vertex-model descriptions of tissue mechanics and experimentally accessible observables such as traction forces and collagen fiber deformations.

This framework also reveals that different invasion modes arise from distinct mechanical pathways. Single-cell breakout requires both cell strain stiffening and a local reduction in cell–cell adhesion, allowing fiber-mediated forces to overcome intercellular cohesion and permit individual cells to escape the spheroid. In contrast, collective streaming invasion cannot be achieved through uniform adhesion weakening alone. Instead, it relies on anisotropic cell–cell adhesion, with adhesion strengthened along the long axis of strained, high-stress cells and weakened in the orthogonal direction. This anisotropy stabilizes follower cells behind a leader, enabling coordinated motion that is reminiscent of active nematic behavior. As a result, streaming invasion is predicted to involve cells that are, on average, more strained and exhibit stronger adhesion anisotropy than those undergoing isolated breakout, providing a clear experimental distinction between invasion modes.

Taken together, our results suggest that invasive behavior in embedded spheroids is governed by a hierarchy of mechanical processes spanning scales: spheroid-level rheology controls stress distributions and overall fiber network remodeling capacity; cell-level strain stiffening enables more local force generation; and adhesion anisotropy selects invasion mode. As we move to more detailed descriptions of how unjamming emerges, we must offer increasingly more testable predictions for experiments, including anisotropic cadherin reinforcement, elevated cellular strain in streaming cells relative to isolated invaders, and even bringing into the picture the cell nucleus~\cite{Liu2021,Berg2023,Zhang2026} as well as the intelligence of cells themselves~\cite{Anisetti2023,Anisetti2022,Ameen2025} to disentangle the multiscale details of cellular invasion.

\section*{Funding}
Tao Zhang acknowledges financial support from the NSFC/China via award 22303051. Mingming Wu and JMS acknowledge financial support from the National Science Foundation under Grant PoLS-2014192. Alison E. Patteson and JMS acknowledge financial support from the National Science Foundation under grant PoLS-2412961. Alison E. Patteson acknowledges NSF CMMI 2238600 and NIH R35GM142963. 

\section*{Acknowledgements}
Shabeeb Ameen acknowledges Gentian Muhaxheri for useful discussion regarding stress tensors.

\section*{Data Availability}
The data and code required to reproduce our analyses are available on Zenodo at https://doi.org/10.5281/zenodo.18514493. 

\bibliography{espheroid3D}

@article{Liu2021,
  title={Dynamic nuclear structure emerges from chromatin cross-links and motors},
  author={Liu, Kuang and Patteson, Alison E and Banigan, Edward J and Schwarz, J. M.},
  journal={Physical Review Letters},
  volume={126},
  number={15},
  pages={158101},
  year={2021},
  publisher={APS}
}

@article{Berg2023,
  title={Transcription inhibition suppresses nuclear blebbing and rupture independently of nuclear rigidity},
  author={Berg, Isabel K and Currey, Marilena L and Gupta, Sarthak and Berrada, Yasmin and Nguyen, Bao V and Pho, Mai and Patteson, Alison E and Schwarz, J. M. and Banigan, Edward J and Stephens, Andrew D},
  journal={Journal of cell science},
  volume={136},
  number={20},
  pages={jcs261547},
  year={2023},
  publisher={The Company of Biologists Ltd}
}

@article{Zhang2026,
  title={How human-derived brain organoids are built differently from brain organoids derived from genetically-close relatives: A multi-scale hypothesis},
  author={Zhang, Tao and Gupta, Sarthak and Lancaster, Madeline and Schwarz, J. M.},
  journal={Soft Matter},
  year={2026},
  publisher={Royal Society of Chemistry}
}

@article{Mayett2017,
  title={Chase-and-run dynamics in cell motility and the molecular rupture of interacting active elastic dimers},
  author={Mayett, David and Bitten, Nicholas and Das, Moumita and Schwarz, J. M.},
  journal={Physical Review E},
  volume={96},
  number={3},
  pages={032407},
  year={2017},
  publisher={APS}
}

@article{Storm2005,
  title={Nonlinear elasticity in biological gels},
  author={Storm, Cornelis and Pastore, Jennifer J and MacKintosh, Fred C and Lubensky, Tom C and Janmey, Paul A},
  journal={Nature},
  volume={435},
  number={7039},
  pages={191--194},
  year={2005},
  publisher={Nature Publishing Group UK London}
}

@article{Gandikota2020,
  title={Loops versus lines and the compression stiffening of cells},
  author={Gandikota, Mahesh C and Pogoda, Katarzyna and Van Oosten, Anne and Engstrom, Tyler A and Patteson, Alison E and Janmey, P. A. and Schwarz, J. M.},
  journal={Soft Matter},
  volume={16},
  number={18},
  pages={4389--4406},
  year={2020},
  publisher={Royal Society of Chemistry}
}

@article{Cheung20243D,
  title={3D traction force microscopy in biological gels: from single cells to multicellular spheroids},
  author={Cheung, Brian CH and Abbed, Rana J and Wu, Mingming and Leggett, Susan E},
  journal={Annual Review of Biomedical Engineering},
  volume={26},
  year={2024},
  publisher={Annual Reviews}
}

@article{Ameen2025,
  title={},
  author={Ameen, Shabeeb and Zhang, Tao and Schwarz, J. M.},
  journal={in preparation},
  volume={},
  number={},
  pages={},
  year={2025},
  publisher={}
}

@article{leDuc2010,
  title={Vinculin potentiates E-cadherin mechanosensing and is recruited to actin-anchored sites within adherens junctions},
  author={le Duc, Q. and Shi, Q. and Blonk, I. and Sonnenberg, A. and Wang, N. and Leckband, D. and de Rooij, J.},
  journal={Journal of Cell Biology},
  volume={189},
  pages={1107--1115},
  year={2010}
}

@article{Engl2014,
  title={Actin dynamics modulate mechanosensitive junction remodeling},
  author={Engl, W. and Arasi, B. and Yap, L. L. and Thiery, J. P. and Viasnoff, V.},
  journal={Journal of Cell Biology},
  volume={207},
  pages={639--654},
  year={2014}
}

@article{Mertz2013,
  title={Cadherin-based intercellular adhesions organize epithelial cell–matrix traction forces},
  author={Mertz, A. F. and Che, Y. and Banerjee, S. and Goldstein, J. M. and Rosowski, K. A. and Revilla, S. F. and Niessen, C. M. and Marchetti, M. C. and Dufresne, E. R. and Horsley, V.},
  journal={Proceedings of the National Academy of Sciences},
  volume={110},
  pages={842--847},
  year={2013}
}

@article{Zink2012,
  title        = {Der optical stretcher — kontaktfreie Messung der Zellmechanik},
  author       = {Zink, Mareike and Braunm{\"u}ller, Sabrina and Nnetu, Kelvin David and Knorr, Melanie and K{\"a}s, Josef A.},
  journal      = {Advances in Experimental Medicine and Biology},
  year         = {2012},
  volume       = {737},
  pages        = {21--45}
}

@article{IcardArcizet2008,
  title        = {Cell stiffening in response to external stress is correlated to actin recruitment},
  author       = {Icard-Arcizet, Delphine and Cardoso, Olivier and Richert, Alain and H{\'e}non, Sylvie},
  journal      = {Biophysical Journal},
  year         = {2008},
  volume       = {94},
  number       = {7},
  pages        = {2906--2913},
  doi          = {10.1529/biophysj.107.118265}
}

@article{Fernandez2008,
  title        = {Single cell mechanics: stress stiffening and kinematic hardening},
  author       = {Fern{\'a}ndez, Pablo and Ott, Albrecht},
  journal      = {Physical Review Letters},
  year         = {2008},
  volume       = {100},
  number       = {23},
  pages        = {238102},
  doi          = {10.1103/PhysRevLett.100.238102}
}

@article{Kollmannsberger2011,
  title        = {Nonlinear viscoelasticity of adherent cells is controlled by cytoskeletal tension},
  author       = {Kollmannsberger, Philip and Mierke, Claudia T. and Fabry, Ben},
  journal      = {Soft Matter},
  year         = {2011},
  volume       = {7},
  number       = {7},
  pages        = {3127--3132},
  doi          = {10.1039/C0SM00833H}
}

@article{cai2025,
  title={Unjamming Transition as a Paradigm for Biomechanical Control of Cancer Metastasis},
  author={Cai, Grace and Rodgers, Nicole C and Liu, Allen P},
  journal={Cytoskeleton},
  volume={82},
  number={6},
  pages={388--403},
  year={2025},
  publisher={Wiley Online Library}
}

@article{gottheil2023,
  title={State of cell unjamming correlates with distant metastasis in cancer patients},
  author={Gottheil, Pablo and Lippoldt, Juergen and Grosser, Steffen and Renner, Frederic and Saibah, Mohamad and Tschodu, Dimitrij and Possoegel, Anne-Kathrin and Wegscheider, Anne-Sophie and Ulm, Bernhard and Friedrichs, Kay and others},
  journal={Physical Review X},
  volume={13},
  number={3},
  pages={031003},
  year={2023},
  publisher={APS}
}

@article{oswald2017,
  title={Jamming transitions in cancer},
  author={Oswald, Linda and Grosser, Steffen and Smith, David M and K{\"a}s, Josef A},
  journal={Journal of physics D: Applied physics},
  volume={50},
  number={48},
  pages={483001},
  year={2017},
  publisher={IOP Publishing}
}

@article{NguyenNgoc2012,
  title={Collective cell invasion of the basement membrane in breast cancer},
  author={Nguyen-Ngoc, Kien-Vinh and Cheung, Kevin J. and Brenot, Antoine and Shamir, Eliah R. and Gray, Rebecca S. and Hines, W. Charles and Yaswen, Paul and Werb, Zena and Ewald, Andrew J. and Weaver, Valerie M.},
  journal={Proceedings of the National Academy of Sciences},
  volume={109},
  number={39},
  pages={E2595--E2604},
  year={2012}
}

@article{Chepizhko2021,
  title={Jamming–unjamming transition in dense cell collectives},
  author={Chepizhko, Oleksandr and Alert, Ricard and Wingreen, Ned S. and Fredberg, Jeffrey J.},
  journal={iScience},
  volume={24},
  pages={102271},
  year={2021}
}

@article{Friedl2000,
  title={The biology of cell locomotion within three-dimensional extracellular matrix},
  author={Friedl, Peter and Br{\"o}cker, Ernst-Bernhard},
  journal={Cellular and Molecular Life Sciences},
  volume={57},
  pages={41--64},
  year={2000}
}

@article{nestor2018,
  title={Relating cell shape and mechanical stress in a spatially disordered epithelium using a vertex-based model},
  author={Nestor-Bergmann, Alexander and Goddard, Georgina and Woolner, Sarah and Jensen, Oliver E},
  journal={Mathematical medicine and biology: a journal of the IMA},
  volume={35},
  number={Supplement\_1},
  pages={i1--i27},
  year={2018},
  publisher={Oxford University Press}
}

@article{zhang2025,
  title={Enhanced extracellular matrix remodeling due to embedded spheroid fluidization},
  author={Zhang, Tao and Ameen, Shabeeb and Ghosh, Sounok and Kim, Kyungeun and Pandey, Mrinal and Cheung, Brian and Thanh, Minh and Patteson, Alison E and Wu, Mingming and Schwarz, J. M.},
  volume={27},
pages = {073301},
  journal={New Journal of Physics},
  year={2025}
}

@article{pandey2025,
  title={Viscoelastic properties of tumor spheroids revealed by a microfluidic compression device and a modified power law model},
  author={Pandey, Mrinal and Zhu, Bangguo and Roach, Kaitlyn and Suh, Young Joon and Segall, Jeffrey E and Hui, Chung-Yuen and Wu, Mingming},
  journal={arXiv preprint arXiv:2509.17294},
  year={2025}
}

@article{kim2007,
  title={A hybrid model for tumor spheroid growth in vitro I: theoretical development and early results},
  author={Kim, Yangjin and Stolarska, Magdalena A and Othmer, Hans G},
  journal={Mathematical Models and Methods in Applied Sciences},
  volume={17},
  number={supp01},
  pages={1773--1798},
  year={2007},
  publisher={World Scientific}
}

@article{friedl1998,
  title={Cell migration strategies in 3-D extracellular matrix: Differences in morphology, cell matrix interactions, and integrin function},
  author={Friedl, Peter and Z{\"a}nker, Kurt S and Br{\"o}cker, Eva-B},
  journal={Microscopy research and technique},
  volume={43},
  number={5},
  pages={369--378},
  year={1998},
  publisher={Wiley Online Library}
}

@article{Pandey2024,
  author={Pandey, M. and Suh, Y. J. and Kim, M. and Davis, H. J. and Segall, J. E. and Wu, M. },
  title={Mechanical compression regulates tumor spheroid invasion into a 3D collagen matrix. },
  journal={Physical Biology},
  volume={21},
  number={3},
  pages={036003},
  year={2024},
  publisher={Elsevier}
}

@article{Ho2024,
  title={Vimentin promotes collective cell migration through collagen networks via increased matrix remodeling and spheroid fluidity},
  author={Ho Thanh, Minh Tri and Poudel, Arun and Ameen, Shabeeb and Carroll, Robert and Wu, Mingming and Soman, Pranav and Zhang, Tao and Schwarz, J. M. and Patteson, Alison E.},
  journal={bioRxiv},
  pages={2024--06},
  year={2024},
  publisher={Cold Spring Harbor Laboratory}
}

@article{Parker2024,
  title={},
  author={Amanda Parker and J. M. Schwarz},
  journal={arXiv:2503.14319 },
  volume={},
  number={},
  pages={},
  year={2025},
  publisher={}
}

@article{Anisetti2023,
  title={Learning by non-interfering feedback chemical signaling in physical networks},
  author={Vidyesh Rao Anisetti and Benjamin Scellier and J. M. Schwarz},
  journal={Physical Review Research},
  volume={5},
  number={2},
  pages={023024},
  year={2023},
  publisher={APS}
}

@article{Ilina2020,
  title={Cell--cell adhesion and 3D matrix confinement determine jamming transitions in breast cancer invasion},
  author={Ilina, Olga and Gritsenko, Pavlo G and Syga, Simon and Lippoldt, J{\"u}rgen and La Porta, Caterina AM and Chepizhko, Oleksandr and Grosser, Steffen and Vullings, Manon and Bakker, Gert-Jan and Starru{\ss}, J{\"o}rn and others},
  journal={Nature Cell Biology},
  volume={22},
  number={9},
  pages={1103--1115},
  year={2020},
  publisher={Nature Publishing Group UK London}
}

@article{Lopez2014,
  title={Active elastic dimers: Cells moving on rigid tracks},
  author={J. H. Lopez and Moumita Das and J. M. Schwarz},
  journal={Physical Review E},
  volume={90},
  number={3},
  pages={032707},
  year={2014},
  publisher={APS}
}

@article{Parker2020,
  title={How does the extracellular matrix affect the rigidity of an embedded spheroid?},
  author={Parker, Amanda and Marchetti, M Cristina and Manning, M Lisa and Schwarz, JM},
  journal={Physical Review E},
  volume={111},
  number={4},
  pages={044410},
  year={2025},
  publisher={APS}
}

@article{Zhang2022,
  title={Topologically-protected interior for three-dimensional confluent cellular collectives},
  author={Zhang, Tao and Schwarz, J. M.},
  journal={Physical Review Research},
  volume={4},
  number={4},
  pages={043148},
  year={2022},
  publisher={APS}
}

@article{Huang2020,
  title={Tumor spheroids under perfusion within a 3D microfluidic platform reveal critical roles of cell-cell adhesion in tumor invasion},
  author={Huang, Yu Ling and Ma, Yujie and Wu, Cindy and Shiau, Carina and Segall, Jeffrey E and Wu, Mingming},
  journal={Scientific reports},
  volume={10},
  number={1},
  pages={9648},
  year={2020},
  publisher={Nature Publishing Group UK London}
}

@Article{Bi_2014,
author ="Bi, Dapeng and Lopez, Jorge H. and Schwarz, J. M. and Manning, M. Lisa",
title  ="Energy barriers and cell migration in densely packed tissues",
journal  ="Soft Matter",
year  ="2014",
volume  ="10",
issue  ="12",
pages  ="1885-1890",
publisher  ="The Royal Society of Chemistry",
doi  ="10.1039/C3SM52893F",
url  ="http://dx.doi.org/10.1039/C3SM52893F"
}

@article{Lui_2017,
    author = {Liu, A.P. and Chaudhuri, O. and Parekh, S.H.},
    title = "{New advances in probing cell–extracellular matrix interactions}",
    journal = {Integrative Biology},
    volume = {9},
    number = {5},
    pages = {383-405},
    year = {2017},
    month = {03},
    issn = {1757-9708},

}

@article{Bi2015,
author = {Bi, Dapeng and Lopez, J. H. and Schwarz, J. M. and Manning, M. Lisa},
doi = {10.1038/nphys3471},
issn = {17452481},
journal = {Nature Physics},
number = {12},
pages = {1074--1079},
title = {{A density-independent rigidity transition in biological tissues}},
volume = {11},
year = {2015}
}

\clearpage
\onecolumngrid
\section*{Supplemental Material}
\setcounter{figure}{0}
\renewcommand{\thefigure}{S\arabic{figure}}

\subsection{3D Computational Model}

 The spheroid is represented using a vertex model in which cells are deformable polyhedra governed by quadratic volume and surface area constraints, along with interfacial tension at the spheroid boundary, as encoded in the spheroid energy functional 
\begin{equation}
\label{eq:energy}
E_{SP}= K_V\sum_{j}(V_{j} - V_{0})^2+K_A\sum_{j}(A_{j} -A_{0})^2 +\Gamma\sum_{\alpha}\delta_{\alpha,B} S_{\alpha},
\end{equation}
where $A_{j}$ and $V_{j}$ represent the area and volume of the $j$th cell, respectively.  The fibrous matrix is modeled as a diluted face-centered cubic (FCC) lattice of crosslinked springs with bending stiffness, forming a disordered network that mimics collagen and encoded in the fiber network energy functional
\begin{equation}
E_{FB}=\frac{K_{\text{S}}}{2}\sum_{<ij>} f_{ij}\;(l_{ij}-l_0)^2+\\ 
\frac{K_{\text{B}}}{2}\sum_{m=1}^{3}\sum_{<ijk>=\pi}f_{ij}f_{jk}\;(\theta_{ijk}^m-\pi)^2,
\end{equation}
where $f_{ij}=1$ if a bond is occupied and $f_{ij}=0$ if not, $l_{ij}$ represents the length of each bond, $\sum_{\langle ij \rangle}$ represents sum over all nearest neighbor bonds, $\theta_{ijk}^m$ represents the angle between nearest neighbor bonds that are crosslinked by the same phantom node with phantom node index $m$. Moreover, $\sum_{\langle ijk \rangle=\pi_0}$ represents sum over pairs of nearest neighbor bonds sharing a node and only for those aligned along of the principle axes of the initial FCC lattice. To couple the spheroid to the matrix, a fixed number of active linker springs dynamically attach fiber nodes to the center of boundary cell surfaces and contract over time, simulating actomyosin contractility mediated by focal adhesions. Specifically, the total energy of the active linker springs is quantified by 
\begin{equation}
    E_{LS}=\frac{K_{LS}}{2}\sum_{i=1}^{N_{ls}} (l'_i-l'_{0i}(t))^2,
\end{equation}
where $K_{LS}$ represents the stiffness of each linker spring and $N_{ls}$, the total number of linker springs. 

The total energy of the coupled systems is the sum of these individual energies. To evolve the coupled system, we use overdamped Brownian dynamics for the cell vertices and Euler-Muruyama integration for the fiber nodes. Reconnection events within the spheroid implemented throughout the simulation. Simulations begin with a spheroid inserted into the center of a pruned fiber network, and linker springs gradually shorten to simulate contractility. This model allows exploration of how cell shape (via shape index), matrix stiffness (via bond occupation probability), and now cell stress impacts mechanical and morphological evolution of the embedded spheroid system. Please see the table for the simulation parameters used. 

\begin{table*}
\begin{center}
\begin{tabular} { l | c | c }
{\bf Quantity} & {\bf Symbol} & {\bf Value }\\
 
 Simulation timestep & $\frac{dt}{t_0}$ & $0.005$ \\   
 Simulation time in total & $\frac{t_f}{t_0}$ & $25000.$ \\
 Cell area stiffness & $\frac{K_A}{K_V\,V_0^{2/3}}$ & $0.1$ \\   
 Cell target surface area & $s_0$ & 5.2,5.8 \\  
 Interfacial tension & $\frac{\Gamma}{K_V\,V_0^{4/3}}$ & 1.0 \\
 Reconnect. Event threshold edge length & $\frac{l_{th}}{V_0^{1/3}}$ & $0.02$ \\
 Damping & $\frac{K_V\,V_0^{4/3}\,t_0}{\mu}$ & $1$\\   
 Active force fluctuation energy & $\frac{k_BT_{eff}}{K_V\,V_0^2}$ & $10^{-4}$ \\
 Individual fiber bending stiffness & $\frac{K_B}{K_S l'_{0}(0)^{2}}$ &  $10^{-4}$\\
 Individual fiber diameter & $\frac{d_f}{l'_0(0)}$& 0.02\\
 Fiber network pore size & $\frac{\xi}{l'_0(0)}$ & 1.0\\
 Edge occupation probability for fiber network & $p$ & $0.75-0.95$\\
 Active linker spring stiffness & $\frac{K_{LS}}{K_S \,l_0(0)'^2}$ & 1.0\\
 Final active linker spring target length & $\frac{l'_{t0}}{l'_0(0)}$ & 0.2\\
Number of spheroid cells & $N$ & $~400$ \\
 Number of active linker springs & $N_{LS}$ & 100\\
 Number of fiber network nodes & $N_f$ & $~1000$\\
 Number of realizations & $N_R$ & $30$\\
\end{tabular}
\caption{Table of the dimensionless parameters used in the embedded spheroid simulations.}
\end{center}
\end{table*}
\subsection{Cell Properties: Stress and Shape}

We now introduce a generalizable, triangulated cell stress tensor calculation and provide physical and geometrical intuition for the stress components. We will also provide analytical intuition for the maximum shear stress component obtained from the stress tensor analysis. We also address the gyration tensor for a cell and the accompanying anisotropy measure. 

\subsubsection{3D Cell stress tensor}

The energy function for a single polyhedron with triangular faces is given by:
\begin{align}
E = k_V\bracket{V(\{\vect{r_\mu}\}) - V_0}^2 
+ k_S\bracket{S(\{\vect{r_\mu}\}) - S_0}^2,
\end{align}
where $V(\{\vect{r_\mu}\})$, the volume of the polyhedron and $S(\{\vect{r_\mu}\})$, the surface area of the polyhedron, are both  functions of $\{\vect{r_\mu}\}$, the set of position vectors of polyhedron vertices; and $V_0$ are the $S_0$ are the target volume and target surface area, respectively.

Let us compute the Cauchy stress tensor derived from the forces generated by this energy function. The force on vertex $\mu$ is given by the negative gradient of the energy with respect to vertex position $\vect{r_\mu}$:
\begin{align}\label{force}
\vect{F_\mu} & = -2k_V(V-V_0)\frac{\partial V}{\partial \vect{r_\mu}} 
- 2k_S(S-S_0)\frac{\partial S}{\partial \vect{r_\mu}}.
\end{align}
The Cauchy stress tensor for the single polyhedral cell is then defined as:
\begin{align}\label{cauchy_stress_tensor}
\vect{\sigma} = \frac{1}{V}\sum_\mu \vect{r_\mu} \otimes \vect{F_\mu}.
\end{align}
Substituting (\ref{force}) into (\ref{cauchy_stress_tensor}) leads to 
\begin{align}\label{cauchy_stress_tensor_vertex_model}
    \vect{\sigma} =& 
    -\frac{2k_V(V-V_0)}{V}\sum_\mu \vect{r_\mu} \otimes \frac{\partial V}{\partial \vect{r_\mu}} \nonumber\\
    &- \frac{2k_S(S-S_0)}{V}\sum_\mu \vect{r_\mu} \otimes \frac{\partial S}{\partial \vect{r_\mu}}.
\end{align}

The volume of the polyhedron can be expressed as a sum of (signed) tetrahedral volumes, i.e.,
\begin{align}\label{volume}
    V = \frac{1}{3}\sum_{\mathcal{F}} \vect{S}_\mathcal{F} \cdot \vect{d}_\mathcal{F},
\end{align}
where $\vect{d}_\mathcal{F}$ is the vector from some arbitrary origin $\vect{O}$ to any point on the plane on triangular face $\mathcal{F}$ of the polyhedron; and $\vect{S}_\mathcal{F}$ is the outward-oriented area vector of $\mathcal{F}$, or 
\begin{align}\label{triangle_area}
    \vect{S}_\mathcal{F} 
    &= \frac{1}{2}\sum_{\mu \in \mathcal{F}} 
    (\vect{r_\mu} - \vect{d}_\mathcal{F}) \times 
    (\vect{r_{\mu+1}} - \vect{d}_\mathcal{F})\nonumber\\
    &= \frac{1}{2}\sum_{\mu \in \mathcal{F}} 
    \vect{r_\mu}\times \vect{r_{\mu+1}},
\end{align}
where the summation over the vertices of $\mathcal{F}$ runs counter-clockwise (with respect to outward orientation). We note that the simplification (\ref{triangle_area}) follows from the zero-sum of the edge vectors of any face. Notably, the surface area of the polyhedron is then the sum of the areas of all the triangular faces:
\begin{align}\label{surface_area}
S = \sum_f \norm{\vect{S}_\mathcal{F}}.
\end{align}

Now, for any polyhedron vertex $\mu$, the derivative of polyhedral volume V with respect to vertex position $\vect{r_\mu}$ involves only the faces that contain that vertex:

\begin{align}\label{volume_derivative}
    \frac{\partial V}{\partial \vect{r_\mu}} 
    &= 
    \frac{1}{3}\sum_{\mathcal{F} \ni \mu} 
    \left[ 
        \frac{\partial \vect{S}_\mathcal{F}}{\partial \vect{r_\mu}} \cdot \vect{d}_\mathcal{F} 
        + \vect{S}_\mathcal{F} \cdot \frac{\partial \vect{d}_\mathcal{F}}{\partial \vect{r_\mu}} 
    \right].
\end{align}

The same applies to the derivative of the polyhedral surface area S with respect to vertex position $\vect{r_\mu}$:
\begin{align}\label{surface_derivative}
    \frac{\partial S}{\partial \vect{r_\mu}} 
    &= \sum_{\mathcal{F} \ni \mu} 
    \frac{\partial \norm{\vect{S}_\mathcal{F}}}{\partial \vect{r_\mu}}.
\end{align}
We insert (\ref{volume_derivative}) and (\ref{surface_derivative}) into (\ref{cauchy_stress_tensor_vertex_model}), noting that it is possible to interchange the order of summations, such that
\begin{align*}
    \vect{\sigma} =& 
    -\frac{2k_V(V-V_0)}{V}\\
    &\quad\quad
    \sum_\mathcal{F} \sum_{\mu\in \mathcal{F}}
    \vect{r_\mu} \otimes \frac{1}{3}
    \bracket{
    \frac{\partial \vect{S}_\mathcal{F}}{\partial \vect{r_\mu}} \cdot \vect{d}_\mathcal{F} 
    + \vect{S}_\mathcal{F} \cdot \frac{\partial \vect{d}_\mathcal{F}}{\partial \vect{r_\mu}}
    }
    \\
    &- \frac{2k_S(S-S_0)}{V}
    \sum_\mathcal{F} \sum_{\mu\in \mathcal{F}}
    \vect{r_\mu} 
    \otimes 
    \frac{\partial \norm{\vect{S}_\mathcal{F}}}{\partial \vect{r_\mu}}.
\end{align*}
In Sec.~\ref{appendix_identities} of this Supplemental Material, we develop a series of identities that further simplify the stress tensor. In particular, Eqs.~(\ref{identity_tensor_product_1}), (\ref{identity_tensor_product_2}), and (\ref{identity_tensor_product_3}) reduce the above expression to the following form for the stress tensor for a single polyhedron:

\begin{align}\label{stress_tensor_final}
    \vect{\sigma} 
    =& 
    -2k_V(V-V_0)\mathbb{I}
    - \frac{2k_S(S-S_0)}{V}
    S \mathbb{I}
    \nonumber\\
    &
    + \frac{2k_S(S-S_0)}{V}
    \sum_f 
    \frac{1}{\norm{\vect{S}_\mathcal{F}}}
    \bracket{
    \vect{S}_\mathcal{F} 
    \otimes 
    \vect{S}_\mathcal{F}
    }.
\end{align}

Let $\sigma_1$, $\sigma_2$, $\sigma_3$ be the eigenvalues of \ref{stress_tensor_final} in ascending order, corresponding to normalized eigenvectors $\vect{\sigma_1}$, $\vect{\sigma_2}$, $\vect{\sigma_3}$. The maximum shear stress is then defined in terms of the difference between the maximum and minimum principal stresses, namely
\begin{align}\label{maximum_shear_stress}
    \sigma_{shear} = \frac{1}{2}(\sigma_3-\sigma_1).
\end{align}

\subsubsection{3D Cell gyration tensor}

Given $\{\vect{r_\mu}\}$, the shape of the polyhedral cell can be described by the corresponding gyration tensor, 
\begin{align}\label{shape_tensor}
    \vect{G} 
    =&
    \frac{1}{N}
    \sum_{\mu}
    \vect{r}_\mu
    \otimes 
    \vect{r}_\mu,
\end{align}
where N is the number of vertices. Let $g_1$, $g_2$, $g_3$ be the eigenvalues of \ref{shape_tensor} in ascending order, corresponding to normalized eigenvectors $\vect{g_1}$, $\vect{g_2}$, $\vect{g_3}$. The relative shape anisotropy $\kappa^2$ described by the distribution of the vertices is then

\begin{align}
    \kappa^2 =& 
    \frac{3}{2}
    \frac{g_1^2+g_2^2+g_3^2}
    {\left(g_1+g_2+g_3\right)^2}
    -
    \frac{1}{2}.
\end{align}
The relative shape anisotropy, bounded between 0 and 1, represents the elongation of the cell, with $\kappa=0$ corresponding to a spherical distribution of the vertices and $\kappa=1$ representing a linear distribution. The axis of elongation is captured by the eigenvector $\vect{g_3}$.

\subsection{Identities used in the stress tensor derivation}
\label{appendix_identities}

\newcounter{idCounter}
\newcounter{coroCounter}
\setcounter{idCounter}{0}
\newtheorem{identityTemp}{Identity}
\renewcommand{\theidentityTemp}{\arabic{idCounter}}
\newenvironment{identity}{
    \setcounter{coroCounter}{0}
    \stepcounter{idCounter}
    \begin{identityTemp}
}{
    \end{identityTemp}
    
}
\newtheorem{corollaryTemp}{Corollary}[identityTemp]
\renewcommand{\thecorollaryTemp}{\arabic{idCounter}.\arabic{coroCounter}}
\newenvironment{corollary}{
    \stepcounter{coroCounter}
    \begin{corollaryTemp}
}{
    \end{corollaryTemp}
}
The following is a collection of identities useful in simplifying the analytical expression of the stress tensor.

Conventions: Lower case Greek indices for vertex labels, lower case Latin indices for vector components. Twice-repeated lower case Latin indices within any term indicates Einstein summation.

\begin{identity}
    \begin{align} \label{identity_derivative_triangular_area_1}
        \frac{\partial \vect{S}_\mathcal{F}}{\partial \vect{r_\mu}} \cdot \vect{d}_\mathcal{F} 
        &= \frac{1}{2}
        (\vect{r_{\mu+1}} - \vect{r_{\mu-1}}) \times \vect{d}_\mathcal{F}
    \end{align}
\end{identity}

\begin{proof}
Expressing L.H.S. in component form and plugging in (\ref{triangle_area}):
\begin{align*}
    \frac
        {\partial \bracket{\vect{S}_\mathcal{F}}^j }
        {\partial \bracket{\vect{r_\mu}}^i}
    \bracket{\vect{d}_\mathcal{F}}^j
    &=
    \bracket{
    \sum_{\nu \in \mathcal{F}}
    \frac{1}{2}
    \frac
        {\partial}
        {\partial \bracket{\vect{r_\mu}}^i}
        \bracket{
        \vect{r_\nu}
        \times
        \vect{r_{\nu+1} }
        }^j
    }
    \bracket{\vect{d}_\mathcal{F}}^j\\
    &=
    \frac{1}{2}
    \epsilon_{jim}
    \bracket{\vect{r_{\mu+1}} - \vect{r_{\mu-1}}}^m
    \bracket{\vect{d}_\mathcal{F}}^j
\end{align*}
which is R.H.S in component form.
\end{proof}

\begin{corollary}
\begin{align}\label{identity_derivative_triangular_area_2}
    \frac{\partial \norm{\vect{S}_\mathcal{F}}}{\partial \vect{r_\mu}} 
    &= 
    \frac{1}{2}
    \bracket{\vect{r_{\mu+1}} - \vect{r_{\mu-1}}} 
    \times 
    \frac{\vect{S}_\mathcal{F}}{\norm{\vect{S}_\mathcal{F}}}
\end{align}
\end{corollary}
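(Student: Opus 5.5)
The corollary follows from Identity~1 by the chain rule applied to $\norm{\vect{S}_\mathcal{F}} = (\vect{S}_\mathcal{F}\cdot\vect{S}_\mathcal{F})^{1/2}$. Differentiating with respect to the $i$-th component of $\vect{r_\mu}$ gives, in components,
\begin{align*}
\frac{\partial \norm{\vect{S}_\mathcal{F}}}{\partial (\vect{r_\mu})^i}
= \frac{\partial (\vect{S}_\mathcal{F})^j}{\partial (\vect{r_\mu})^i}\,
\frac{(\vect{S}_\mathcal{F})^j}{\norm{\vect{S}_\mathcal{F}}},
\end{align*}
which is exactly the left-hand side of Identity~1 with the fixed vector $\vect{d}_\mathcal{F}$ replaced by the unit normal $\hat{\vect{n}}_\mathcal{F} = \vect{S}_\mathcal{F}/\norm{\vect{S}_\mathcal{F}}$.

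The only point needing care is that Identity~1 was stated for a contraction with $\vect{d}_\mathcal{F}$ regarded as independent of $\vect{r_\mu}$. Its proof, however, only used the explicit form $\vect{S}_\mathcal{F} = \tfrac12\sum_{\nu\in\mathcal{F}}\vect{r_\nu}\times\vect{r_{\nu+1}}$ from (\ref{triangle_area}) to compute $\partial(\vect{S}_\mathcal{F})^j/\partial(\vect{r_\mu})^i = \tfrac12\epsilon_{jim}(\vect{r_{\mu+1}}-\vect{r_{\mu-1}})^m$. This Jacobian is a property of $\vect{S}_\mathcal{F}$ alone, so it may be contracted with any vector evaluated at the current configuration, including $\hat{\vect{n}}_\mathcal{F}$. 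No derivative of the contracting vector appears, because the chain rule for the norm produces only the Jacobian of $\vect{S}_\mathcal{F}$ times $\vect{S}_\mathcal{F}/\norm{\vect{S}_\mathcal{F}}$.

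Substituting the Jacobian gives
\begin{align*}
\frac{\partial \norm{\vect{S}_\mathcal{F}}}{\partial (\vect{r_\mu})^i} = \tfrac12\,\epsilon_{jim}\,(\vect{r_{\mu+1}}-\vect{r_{\mu-1}})^m\,(\hat{\vect{n}}_\mathcal{F})^j .
\end{align*}
Using $\epsilon_{jim}=\epsilon_{imj}$ (a cyclic permutation), this equals $\tfrac12\big[(\vect{r_{\mu+1}}-\vect{r_{\mu-1}})\times\hat{\vect{n}}_\mathcal{F}\big]^i$, which is (\ref{identity_derivative_triangular_area_2}).

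The main obstacle is just this index and sign bookkeeping, namely confirming that the orientation matches the one in Identity~1. As a sanity check, $\vect{r_{\mu+1}}-\vect{r_{\mu-1}}$ is the edge opposite $\mu$ in the triangle. Crossing it with the outward unit normal gives a vector in the face plane perpendicular to that edge, pointing away from it (for counter-clockwise ordering) and of length equal to the edge. Half of it is therefore the familiar area gradient, of magnitude half the base length, directed to increase the triangle's height. This confirms the sign.
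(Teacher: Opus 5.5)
Your proof is correct and follows the paper's argument. The paper also uses the chain rule to write $\partial\norm{\vect{S}_\mathcal{F}}/\partial\vect{r_\mu}=\norm{\vect{S}_\mathcal{F}}^{-1}\,(\partial\vect{S}_\mathcal{F}/\partial\vect{r_\mu})\cdot\vect{S}_\mathcal{F}$ and then reuses the Jacobian computed in the proof of Identity 1, with $\vect{d}_\mathcal{F}$ replaced by $\vect{S}_\mathcal{F}$; your remarks that no derivative of the contracting vector appears, the cyclic $\epsilon$ step, and the geometric sign check spell out what the paper leaves implicit.
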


\begin{proof}
    \begin{align*}
        \frac{\partial \norm{\vect{S}_\mathcal{F}}}{\partial \vect{r_\mu}} 
        &= 
        \frac{1}{\norm{\vect{S}_\mathcal{F}}}
        \frac{\partial \vect{S}_\mathcal{F}}{\partial \vect{r_\mu}}
        \cdot \vect{S}_\mathcal{F}
    \end{align*}
which, following the proof in (\ref{identity_derivative_triangular_area_1}), simplifies as required.
\end{proof}

\begin{identity}   
    \begin{align}\label{identity_tensor_product_1}
        \sum_{\mu \in \mathcal{F}} 
        \vect{r_\mu}
        \otimes
        \frac{\partial \vect{S}_\mathcal{F}}{\partial \vect{r_\mu}}
        \cdot
        \vect{d}_\mathcal{F}
        &= 
        \bracket{
            \vect{S}_\mathcal{F}
            \cdot
            \vect{d}_\mathcal{F}
        } \mathbb{I}
        - \vect{d}_\mathcal{F}
        \otimes
        \vect{S}_\mathcal{F}
    \end{align}
\end{identity}

\begin{proof}
Expressing L.H.S. in component form and plugging in (\ref{identity_derivative_triangular_area_1}) 
\begin{align*} 
    &\sum_{\mu \in \mathcal{F}} 
    \bracket{\vect{r_\mu}}^i
    \frac
        {\partial \bracket{\vect{S}_\mathcal{F}}^k }
        {\partial \bracket{\vect{r_\mu}}^j}
    \bracket{\vect{d}_\mathcal{F}}^k\\
    =& 
    \sum_{\mu \in \mathcal{F}} 
    \frac{1}{2}
    \bracket{\vect{r_{\mu}}}^i
    \epsilon_{kjm}
    \bracket{\vect{r_{\mu+1}} - \vect{r_{\mu-1}}}^m
    \bracket{\vect{d}_\mathcal{F}}^k\\
    =&
    \sum_{\mu \in \mathcal{F}} 
    \frac{1}{2}
    \epsilon_{kjm}
    \bracket{
        \delta_{ip}\delta_{mq}
        -\delta_{iq}\delta_{mp}
    }
    \bracket{\vect{r_{\mu}}}^p
    \bracket{\vect{r_{\mu+1}}}^q
    \bracket{\vect{d}_\mathcal{F}}^k\\
    =&
    \sum_{\mu \in \mathcal{F}} 
    \frac{1}{2}
    \epsilon_{kjm}
    \epsilon_{sim}
    \epsilon_{spq}
    \bracket{\vect{r_{\mu}}}^p
    \bracket{\vect{r_{\mu+1}}}^q
    \bracket{\vect{d}_\mathcal{F}}^k\\
    =&
    \delta_{ij}
    \bracket{\vect{S}_\mathcal{F}}^k
    \bracket{\vect{d}_\mathcal{F}}^k
    -
    \bracket{\vect{d}_\mathcal{F}}^i
    \bracket{\vect{S}_\mathcal{F}}^j
\end{align*}
which is R.H.S. in component form.
\end{proof}

\begin{corollary}
    \begin{align}\label{identity_tensor_product_2}
    \sum_{\mu\in \mathcal{F}}
    \vect{r_\mu} 
    \otimes 
    \frac{\partial \norm{\vect{S}_\mathcal{F}}}{\partial \vect{r_\mu}}
    &=
    \norm{\vect{S}_\mathcal{F}} \mathbb{I}
    -
    \frac{1}{\norm{\vect{S}_\mathcal{F}}}
    \vect{S}_\mathcal{F}
    \otimes
    \vect{S}_\mathcal{F}
    \end{align}
\end{corollary}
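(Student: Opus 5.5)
The corollary follows from Identity~\ref{identity_tensor_product_1} by the same substitution that takes Identity~1 to Corollary~1.1. By the chain rule (the first line of the proof of Corollary~1.1),
\[
\frac{\partial \norm{\vect{S}_\mathcal{F}}}{\partial \vect{r_\mu}} = \frac{1}{\norm{\vect{S}_\mathcal{F}}}\,\frac{\partial \vect{S}_\mathcal{F}}{\partial \vect{r_\mu}}\cdot \vect{S}_\mathcal{F}.
\]
This is exactly the contraction appearing in Identity~\ref{identity_tensor_product_1}, with $\vect{d}_\mathcal{F}$ replaced by $\vect{S}_\mathcal{F}$ and an overall factor $1/\norm{\vect{S}_\mathcal{F}}$.

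The plan is to fix the face $\mathcal{F}$, pull the scalar $1/\norm{\vect{S}_\mathcal{F}}$ out of the sum over $\mu\in\mathcal{F}$, and apply Identity~\ref{identity_tensor_product_1} with $\vect{d}_\mathcal{F}$ replaced by the fixed vector $\vect{S}_\mathcal{F}$. This gives
\[
\sum_{\mu\in\mathcal{F}} \vect{r_\mu}\otimes\frac{\partial \norm{\vect{S}_\mathcal{F}}}{\partial \vect{r_\mu}} = \frac{1}{\norm{\vect{S}_\mathcal{F}}}\Big[(\vect{S}_\mathcal{F}\cdot\vect{S}_\mathcal{F})\,\mathbb{I} - \vect{S}_\mathcal{F}\otimes\vect{S}_\mathcal{F}\Big].
\]
Using $\vect{S}_\mathcal{F}\cdot\vect{S}_\mathcal{F}=\norm{\vect{S}_\mathcal{F}}^2$, this becomes $\norm{\vect{S}_\mathcal{F}}\mathbb{I} - \norm{\vect{S}_\mathcal{F}}^{-1}\vect{S}_\mathcal{F}\otimes\vect{S}_\mathcal{F}$, which is the claim.

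The one point needing care, and the only real obstacle, is whether the substitution is legitimate. In Identity~\ref{identity_tensor_product_1}, $\vect{d}_\mathcal{F}$ is held fixed while differentiating: only $\partial\vect{S}_\mathcal{F}/\partial\vect{r_\mu}$ appears, contracted against $\vect{d}_\mathcal{F}$. Here $\vect{S}_\mathcal{F}$ does depend on $\vect{r_\mu}$, but in the chain-rule expression it enters only as the vector being contracted against, after the derivative has been taken. So the manipulation in the proof of Identity~\ref{identity_tensor_product_1} never used anything about the contracted vector beyond it being a fixed vector at the moment of contraction, and it goes through verbatim.

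To make this airtight without appealing to the earlier proof, I would redo the component calculation directly. Start from $\frac{1}{2}\epsilon_{kjm}(\vect{r_\mu})^i(\vect{r_{\mu+1}}-\vect{r_{\mu-1}})^m(\vect{S}_\mathcal{F})^k/\norm{\vect{S}_\mathcal{F}}$. Reindex the $\vect{r_{\mu-1}}$ term cyclically around the face, which is valid because the face's vertex sum is cyclic. Then use $\epsilon_{kjm}\epsilon_{sim}=\delta_{ks}\delta_{ji}-\delta_{ki}\delta_{js}$ together with $(\vect{S}_\mathcal{F})^s=\frac12\sum_\mu\epsilon_{spq}(\vect{r_\mu})^p(\vect{r_{\mu+1}})^q$ from Eq.~(\ref{triangle_area}). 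This reproduces $\delta_{ij}\norm{\vect{S}_\mathcal{F}}^2-(\vect{S}_\mathcal{F})^i(\vect{S}_\mathcal{F})^j$, up to the $1/\norm{\vect{S}_\mathcal{F}}$ prefactor.
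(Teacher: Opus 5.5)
Your proposal is correct and matches the paper's proof: the paper likewise uses the chain rule to write $\partial\norm{\vect{S}_\mathcal{F}}/\partial\vect{r_\mu} = \norm{\vect{S}_\mathcal{F}}^{-1}\,(\partial\vect{S}_\mathcal{F}/\partial\vect{r_\mu})\cdot\vect{S}_\mathcal{F}$, then applies Identity 2 with $\vect{d}_\mathcal{F}$ replaced by $\vect{S}_\mathcal{F}$. The paper performs this substitution without comment, and your remark justifies it correctly: the contracted vector enters only after differentiation, and your component check confirms the result directly.
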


\begin{proof}
\begin{align*}
    \sum_{\mu\in \mathcal{F}}
    \vect{r_\mu} 
    \otimes 
    \frac{\partial \norm{\vect{S}_\mathcal{F}}}{\partial \vect{r_\mu}}
    &=
    \frac{1}{\norm{\vect{S}_\mathcal{F}}}
    \sum_{\mu\in \mathcal{F}}
    \vect{r_\mu} 
    \otimes 
    \frac{\partial \vect{S}_\mathcal{F} }{\partial \vect{r_\mu}}
    \cdot
    \vect{S}_\mathcal{F}\\
    &=
    \frac{1}{\norm{\vect{S}_\mathcal{F}}}
    \left[
        \bracket{
            \vect{S}_\mathcal{F}
            \cdot
            \vect{S}_\mathcal{F}
        } \mathbb{I}
        - \vect{S}_\mathcal{F}
        \otimes
        \vect{S}_\mathcal{F}
    \right]
\end{align*}
\end{proof}

\begin{identity} 
    \begin{align} \label{identity_tensor_product_3}
        \sum_{\mu \in \mathcal{F}} 
        \vect{r_\mu}\
        \otimes
        \vect{S}_\mathcal{F} \cdot
        \frac{\partial \vect{d}_\mathcal{F}}{\partial \vect{r_\mu}}
        = \vect{d}_\mathcal{F}
        \otimes
        \vect{S}_\mathcal{F}
    \end{align}
\end{identity}

\begin{proof}
L.H.S., expressed in component form, reads:
\begin{align*}
    \sum_{\mu \in \mathcal{F}} 
    (\vect{r_\mu})^i
    (\vect{S}_\mathcal{F})^k 
    \frac{\partial (\vect{d}_\mathcal{F})^k}{\partial (\vect{r_\mu})^j}
    &= 
    \bracket{
    \vect{S}_\mathcal{F}
    }^k
    \sum_{\mu \in \mathcal{F}} 
    \bracket{
    \vect{r_\mu}
    }^i
    \frac
        {\partial
        \bracket{
            \vect{d}_\mathcal{F}
        }^k}
        {\partial
        \bracket{
        \vect{r_\mu}
        }^j}
\end{align*}
Now, since $\vect{d}_\mathcal{F}$ is an arbitrary point on the plane of face $\mathcal{F}$, we can write
\begin{align*}
    \vect{d}_\mathcal{F} = \vect{r_\nu}
    + \alpha
    \bracket{
        \vect{r_{\nu+1}}-\vect{r_{\nu}}
    }
    + \beta
    \bracket{
        \vect{r_{\nu}}-\vect{r_{\nu-1}}
    }
\end{align*}
where $\vect{r_{\nu}}$ is any of the vertices of face $\mathcal{F}$ (arranged in counter-clockwise order) and $\alpha,\beta$ are arbitrary parameters. Accordingly, 
\begin{align*}
\sum_{\mu \in \mathcal{F}} 
    \bracket{
    \vect{r_\mu}
    }^i
    \frac
        {\partial
        \bracket{
            \vect{d}_\mathcal{F}
        }^k}
        {\partial
        \bracket{
        \vect{r_\mu}
        }^j}
    &=
    \delta_{jk}
    \bracket{
        \vect{d}_\mathcal{F}
    }^i
\end{align*}
Plugging above into L.H.S. gives the required identity.
\end{proof}

\subsection{Experimental Protocols}

{\it Cell Culture.} Vimentin wild-type (Vim$+/+$) and vimentin knockout (Vim$-/-$) mouse embryonic fibroblasts (MEFs) were gifted by J. Eriksson from Abo Akademi. All cells were cultured in Dulbecco’s Modified Eagle’s Medium + 4.5g/L glucose + 2nM L-glutamine + sodium pyruvate (DMEM, Gibco) and supplemented with 10$\%$ fetal bovine serum (FBS, Hyclone), 1$\%$ non-essential amino acid (Fisher Scientific), 1$\%$ penicillin/streptomycin (Fisher Scientific) and 25 nM of HEPES (Fisher Scientific). Cell cultures were maintained at 37 degrees C and 5$\%$ $CO_2$. Cells were passaged at confluency 70-80$\%$ and medium was replenished every 3 days.\\
{\it Spheroid Culture.} The spheroid culture was created using hanging droplet methods. In brief, cells were detached from culture dish with trypsin, then centrifuged and pipetted on the lid of a 100 mm culture dish at 3000 cells/20 ul in DMEM/F12 (10$\%$ FBS, 1x Pen/Strep). Cell aggregates were incubated inverted on the lid on top of cell media at 37 degrees C with 5$\%$ $CO_2$ for 2 days before harvesting. \\
{\it Collagen Gel Preparation.} Collagen gels were prepared using a slightly modified protocol from Ibidi (cite). Collagen type I, rat tail 5 mg/ml (Ibidi) were diluted on ice to 1.5 mg/ml with 6.67$\%$ of 5x DMEM (Fisher scientific), 10$\%$ FBS in 1x DMEM. The pH was adjusted to 7.4 using 1M NaOH and sodium bicarbonate. Cell aggregates were added to the collagen solution before it solidified. The collagen solution is then pipetted on 1$\%t$ agarose coated 24-well flat bottom plates (Corning) and let to polymerize for 40 minutes in an incubator at 37 degrees C, 5$\%$ $CO_2$ and 100$\%$ humidity. DMEM complete media was added on top of the fully polymerized collagen gel. Finally, the embed spheroids in collagen gel is imaged for 48 hours (15 minutes interval) on Nikon Eclipse Ti inverted microscope using Plan Fluor 10× objective equipped with an Andor Technologies iXon em+ EMCCD camera (Andor Technologies). Sample were maintained at 37 degrees C and 5$\%$ $CO_2$ using a Tokai Hit (Tokai-Hit) stage top incubator.

\subsection{Additional maximum shear stress distributions}

The maximum shear stress distributions with their respective Gamma distribution fit parameters for $p=0.75$ and $p=0$ are presented in Figs. S1 and S2. As $p$ decreases, the maximum shear stress distribution for the solid-like spheroids becomes more broad.  In Fig. S3 we show for fluid-like spheroids a cross-section of both the maximum shear stress and the anisotropy measure at the final time of the simulation for comparison with Fig. 3. We do not observe the maximum shear stress increasing with respect to the center of the spheroid as we do in the solid-like spheroid. In other words, there appears to not be statistically-significant spatial patterning in the fluid-like case. 

\begin{figure}[htbp!]
\begin{center}
\includegraphics[width=0.3\textwidth]{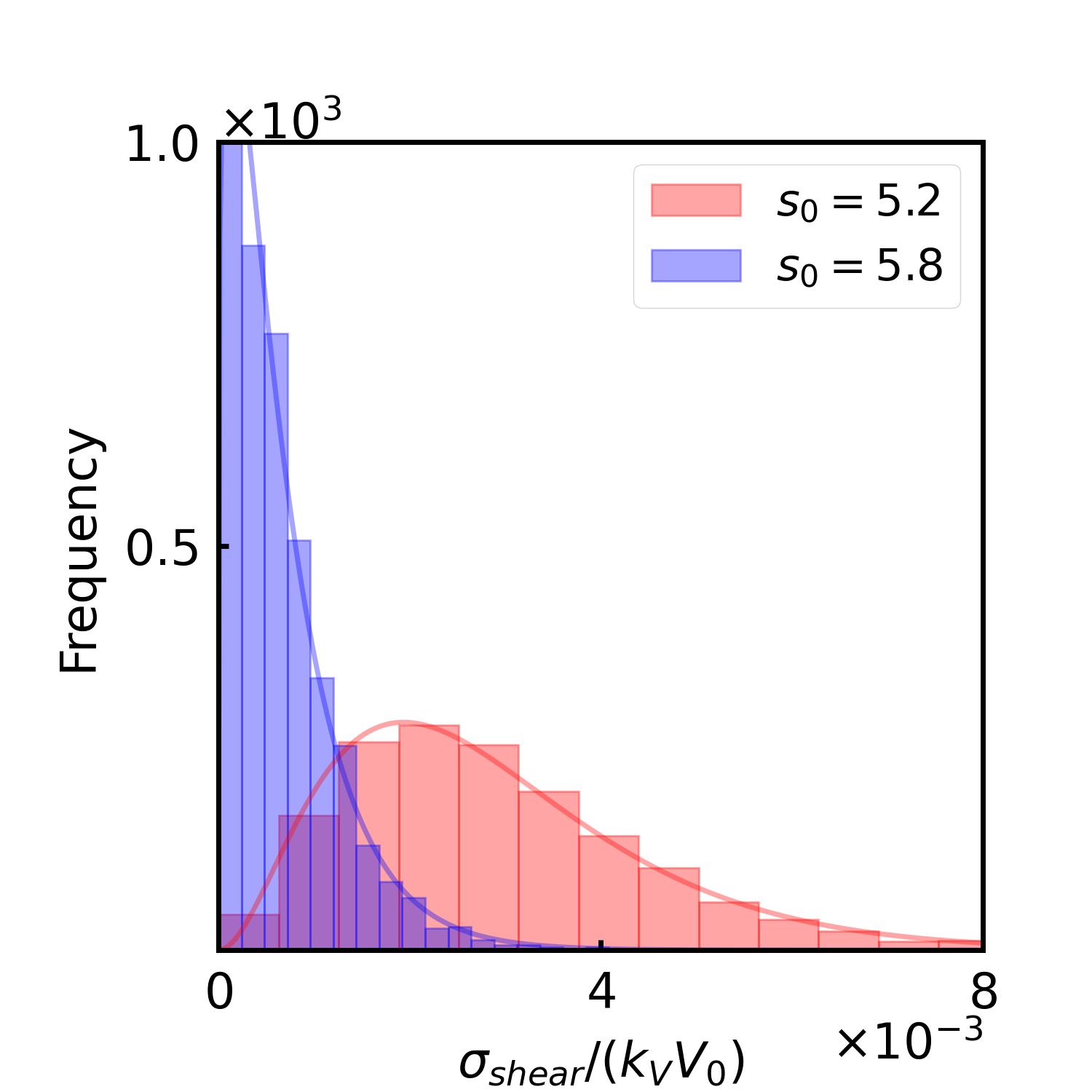}
\end{center}
\caption{{\it Histogram for maximum shear stress with $p=0.75$. Gamma fit parameters: $s_0 = 5.2$: $\alpha = 3.03, \theta = 9.53\times10^{-4}$; $s_0 = 5.8$: $\alpha = 1.24, \theta = 5.32\times10^{-4}$. }}
\label{fig:AppendixB1}
\end{figure}

\begin{figure}[htbp!]
\begin{center}
\includegraphics[width=0.3\textwidth]{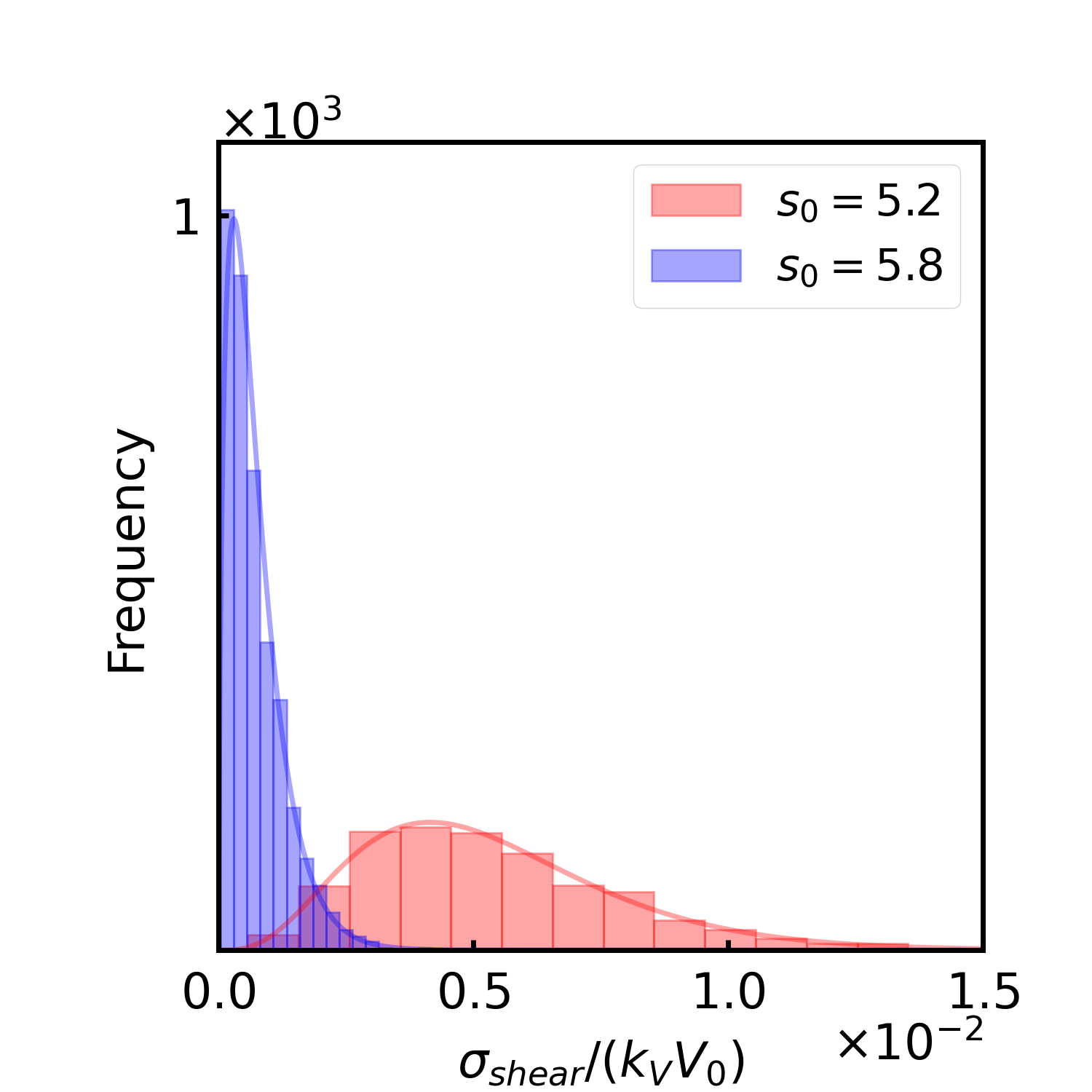}
\end{center}
\caption{{\it Histogram for maximum shear stress with no surrounding fiber network. Gamma fit parameters: $s_0 = 5.2$: $\alpha = 4.44, \theta = 1.21\times10^{-3}$; $s_0 = 5.8$: $\alpha = 1.71, \theta = 4.25\times10^{-4}$}}
\label{fig:AppendixB2}
\end{figure}

\subsection{Cell strain-stiffening and non-volume preserving deformations}
As in the main text, uni-axial strain individual cells and compute the maximum shear stress as a function of this strain. The strain is applied along the direction of the longest axis of the cell, as determined via the largest eigenvalue principle axis of the shape tensor, namely $g_3$. We apply a strain $\alpha$ on the cell via the following transformation on the positions $\vec{r_i}$ of the vertices (setting origin at the center of the cell):

\begin{align}
\vec{r_i}\to\vec{r_i}+\alpha(\hat{r_i}\cdot g_3)\hat{r_i}.
\end{align}
Note that this deformation is not volume-conserving. However, we observe similar trends as with the volume-conserving deformation with the maximum shear stress increasing nonlinearly with strain, though the results are quantitatively distinct. Please see Fig. S4. 

\begin{figure*}[t]
         \centering
         \begin{subfigure}[b]{0.4\textwidth}
             \centering
             \includegraphics[width=\textwidth]{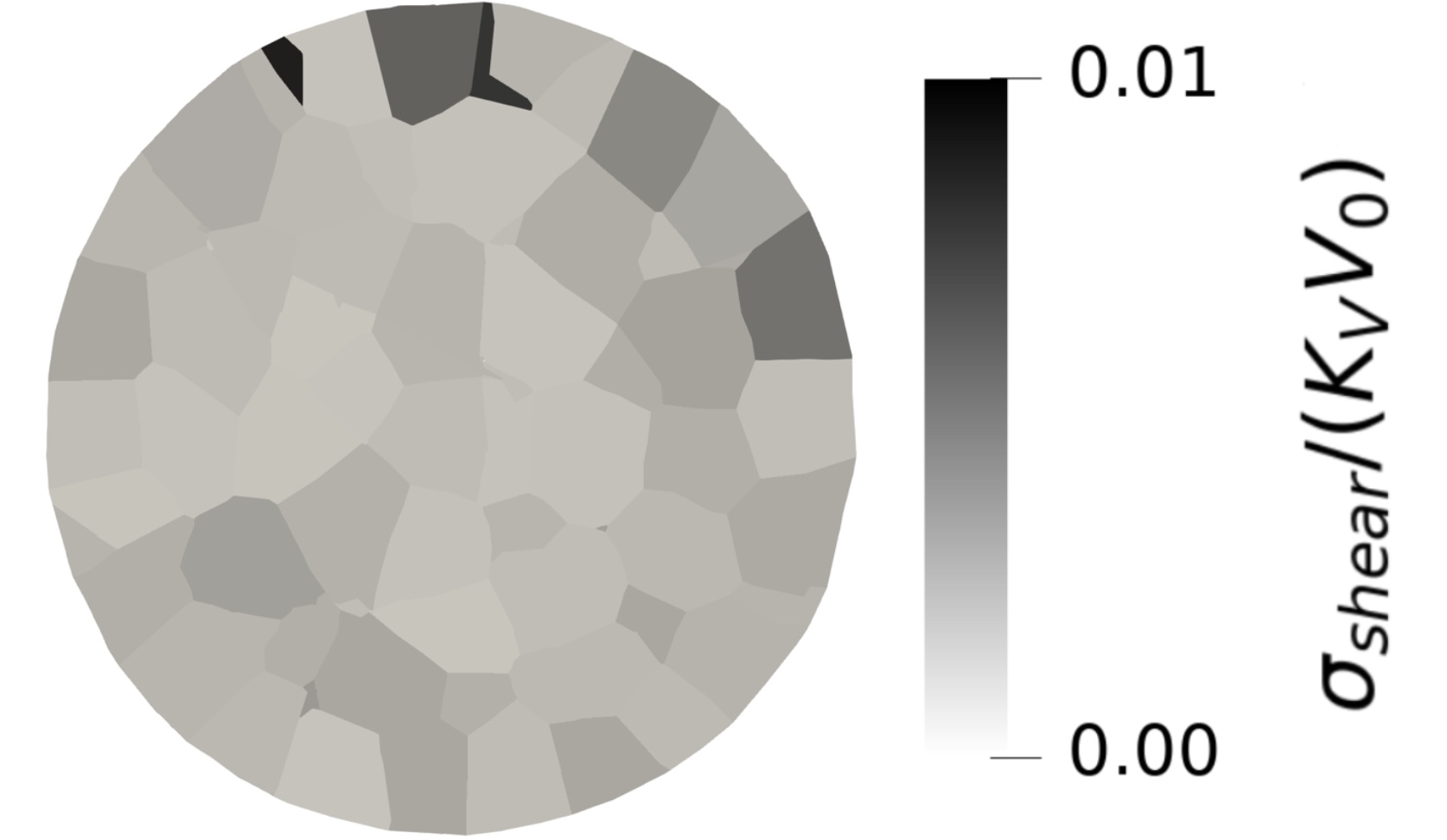}
             \caption{}
             \label{fig: shear_cross_section_fluid}
         \end{subfigure}
         \begin{subfigure}[b]{0.4\textwidth}
             \centering
             \includegraphics[width=\textwidth]{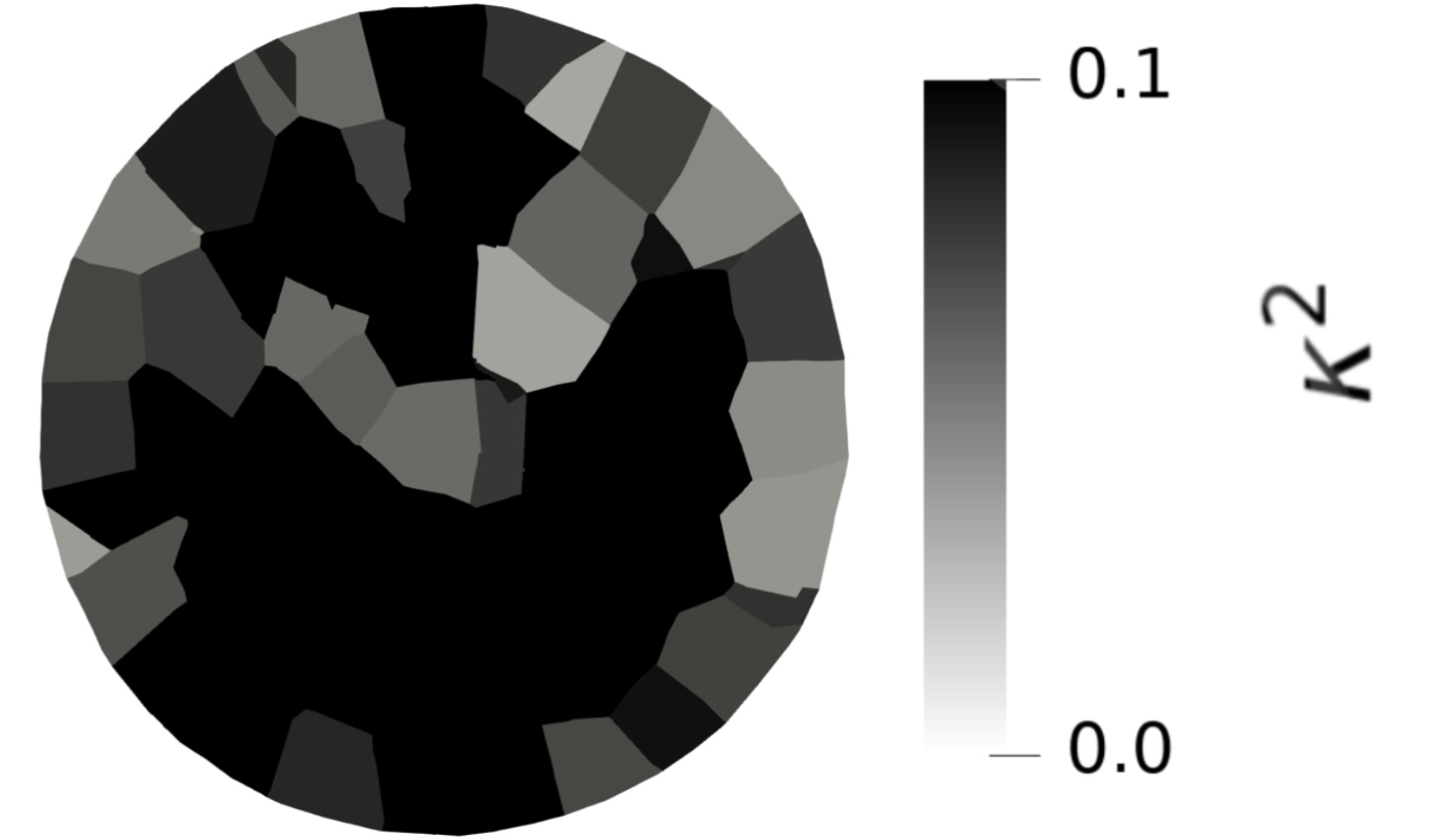}
             \caption{}
             \label{fig: anisotropy_cross_section_fluid}
         \end{subfigure}        

         \caption{{\it Spatial distribution of cellular stresses and cellular shape in fluid-like spheroids.} (a) Sample cross-section of the cellular maximum shear stress. (b) Sample cross-section of the cellular shape anisotropy.  }
\end{figure*}
\begin{figure*}[t]\label{strained_cell_comparisons}
         \centering
         \begin{subfigure}[b]{0.2\textwidth}
             \centering
             \includegraphics[width=\textwidth]{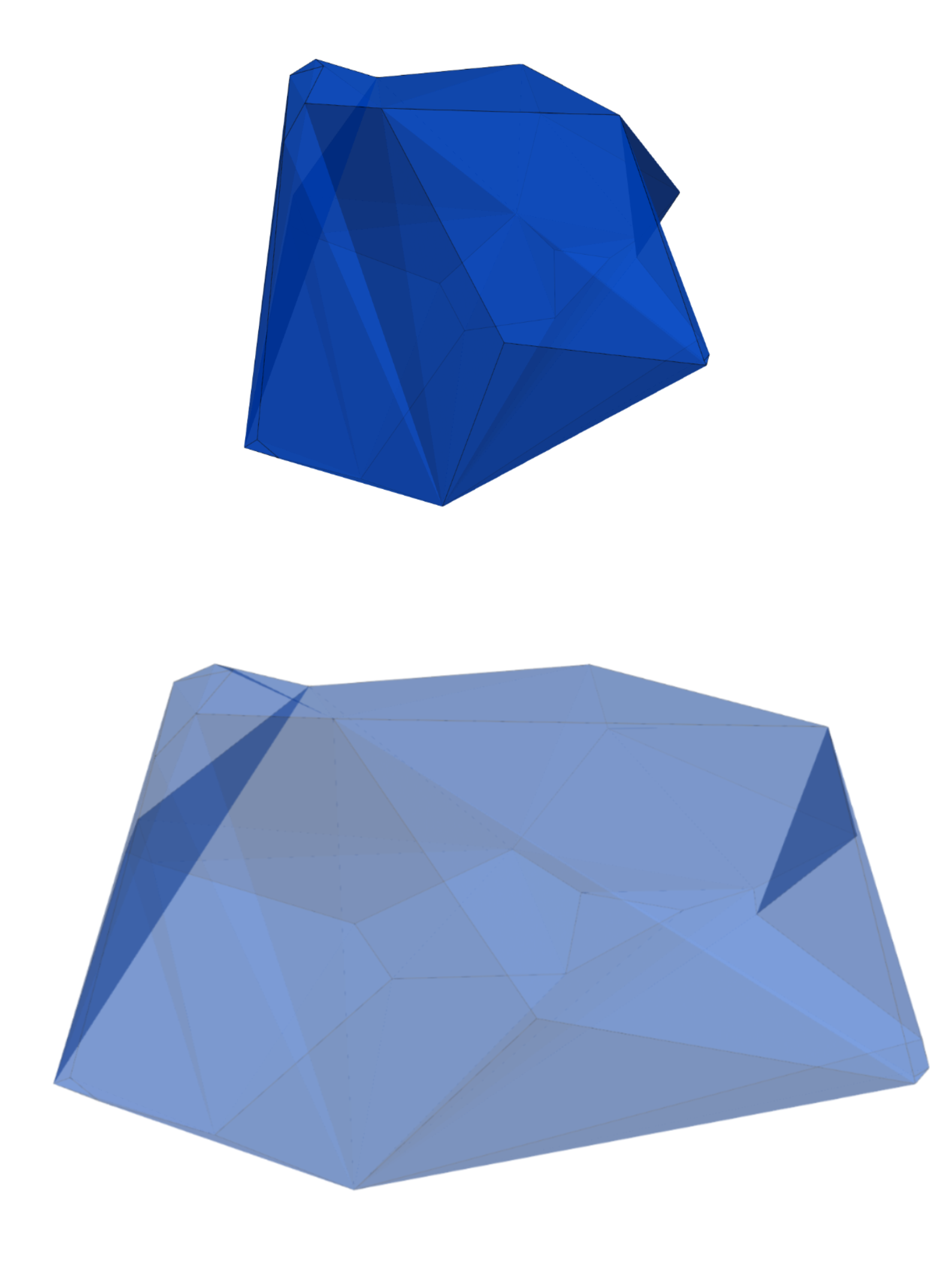}
             \caption{}
             \label{fig: final_cell}
         \end{subfigure}  
         \begin{subfigure}[b]{0.26\textwidth}
             \centering
             \includegraphics[width=\textwidth]{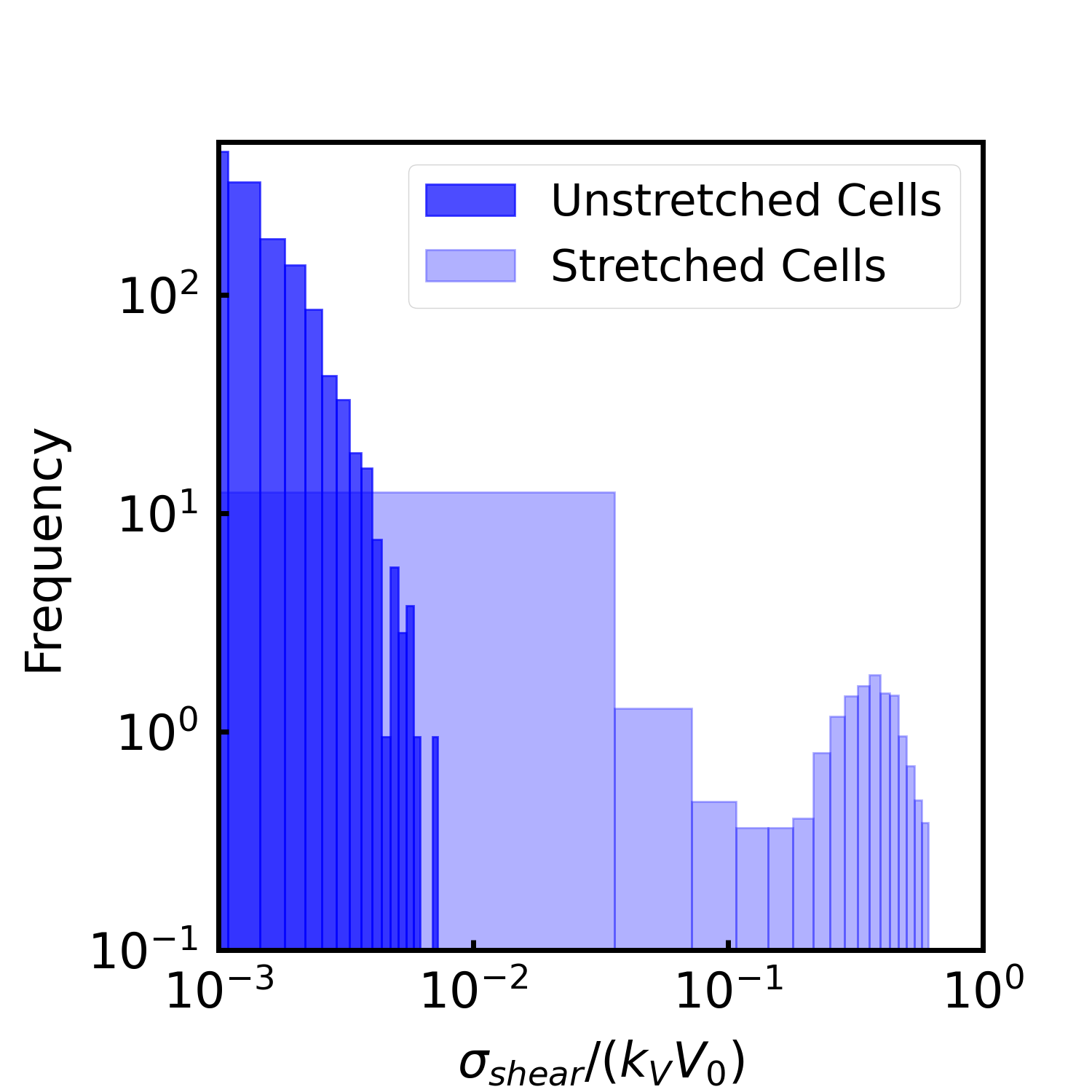}
             \caption{}
             \label{fig: single_cell_overlap_5.2}
         \end{subfigure}
         \begin{subfigure}[b]{0.26\textwidth}
             \centering
             \includegraphics[width=\textwidth]{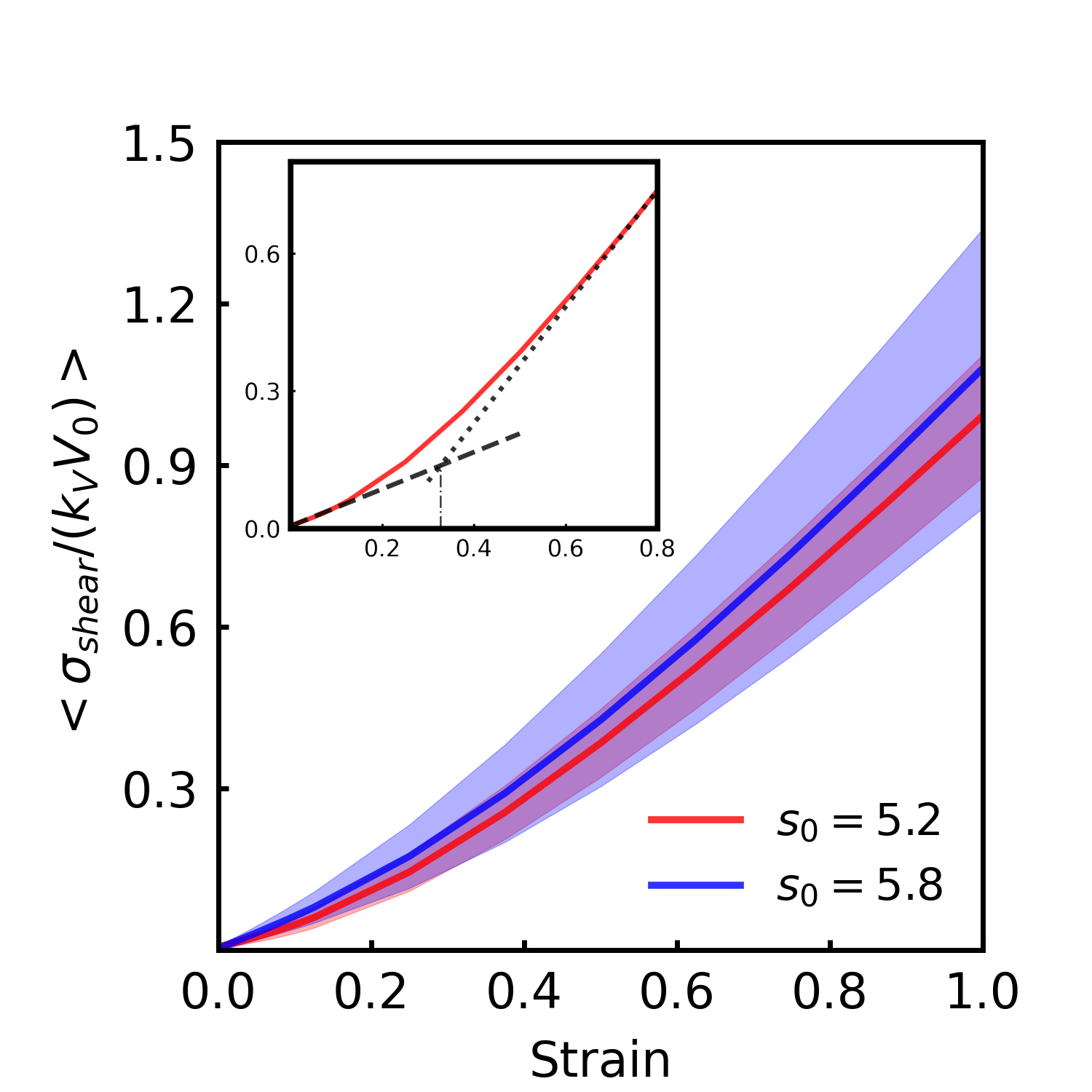}
              \caption{}
             \label{fig: single_cell_overlap_5.8}
         \end{subfigure}
         \begin{subfigure}[b]{0.26\textwidth}
             \centering
             \includegraphics[width=\textwidth]{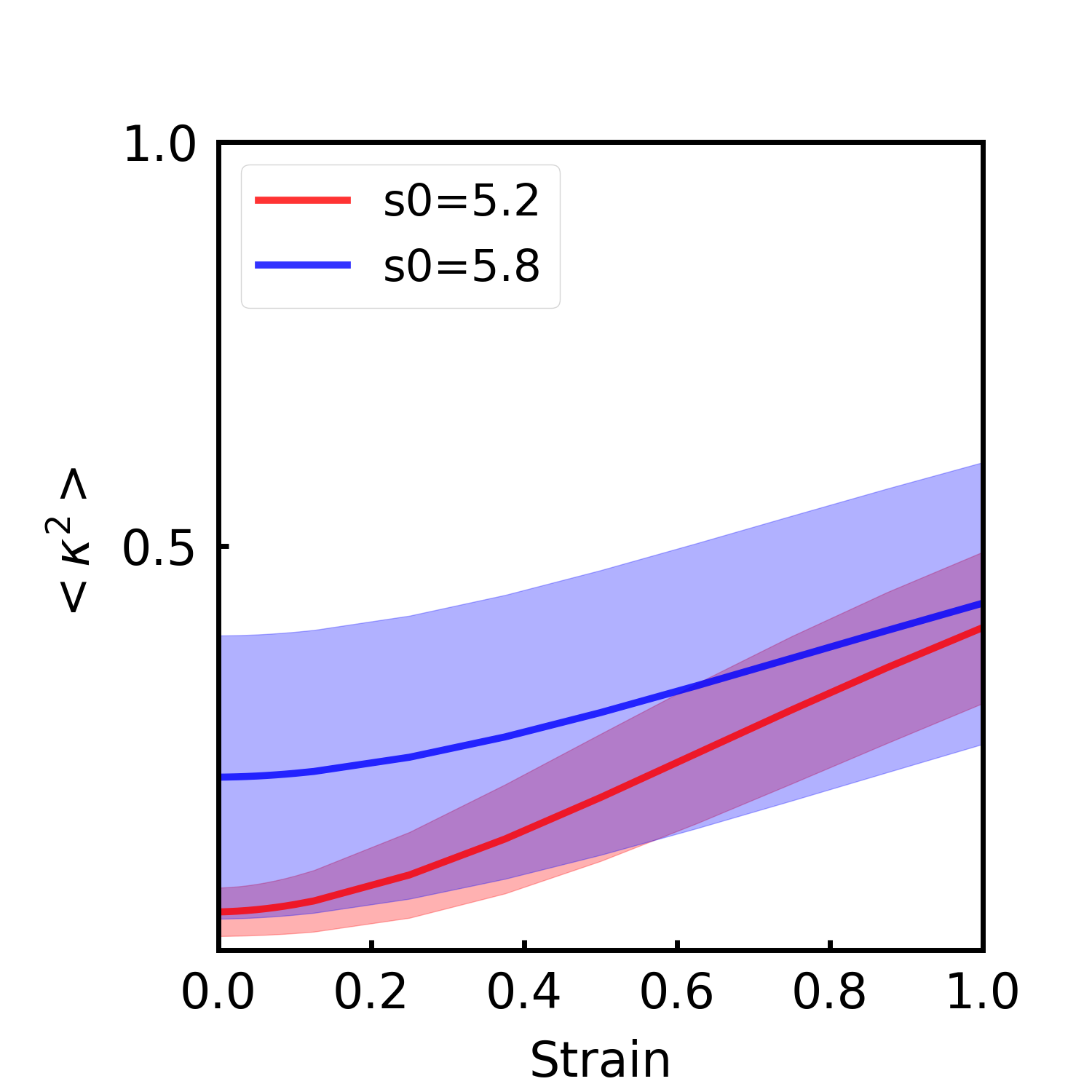}
             \caption{}
             \label{fig: stretched_cellfluid}
         \end{subfigure} 
         \caption{{\it Non-volume preserving cellular deformation}. (a) Initial cell configuration and a strained cell configuration. (b) Distribution of maximum shear stress for the initial cells (dark orange) and for the uni-axial strained cells (light orange) for $s_0=5.2$. (c) Overlap of the stress with the shape for the initial cells and for the strained cells for $s_0=5.2$. The inset is the same analysis as performed in the inset for Fig. 3(c) indicating the onset of strain stiffening at approximately a strain of 0.15. In addition, the vertical dashed line represents a crossover strain behavior from lower to higher strain. (d) Distribution of maximum shear stress for the initial cells (blue) and for the uni-axial strained cells (pink) for $s_0=5.2$. (e) Overlap of the stress with the shape for the initial cells and for the strained cells for $s_0=5.8$.  }
\end{figure*}
\subsection{Extended 3D vertex model for single- and multi-cell breakout}

We construct a three-dimensional extended vertex model, where the cells no longer share interfaces and instead interact via explicit cell-cell adhesion springs between neighboring vertices, each with spring stiffness $k_{cc}$. Stiffer cell-cell adhesion springs represent higher cell-cell adhesion and vice versa. The spheroid is initialized as a regular truncated octahedron array consisting of 12 cells with the same energy function as for the vertex model, and every vertex on the surface is connected to boundary sphere vertices to mimic fiber connections between the spheroid and the surrounding collagen, as shown in Fig.~\ref{fig:cell-breakout-results}. Here, the fibers do not contain bending energy as we are working with a more regular structure so that the bending of the fibers do not dominate the fiber mechanics. 

In this simplified model, single cell breakout is one cell breaking away from the spheroid. Moreover, cell streaming is
exhibited by an outer cell breaking away with a cell behind it moving along in the same direction. Here, the "behind" cell is the center cell. Given the complexity of cell breakout, we invoke a quasi-static approach with no active fluctuations as a starting point. We investigate how the cell or cells deform in response to being pulled out from the spheroid by decreasing the target fiber spring length associated with a particular cell at the boundary. If the distance between the cell edges increases in the single cell case, then we consider it to be breaking out. For the two-cell breakout, we analyze cell shape as well. 

Here are the details regarding the energy minimization. The simulations minimize a mechanical energy functional that describes a single
three-dimensional polyhedral cell embedded in a network of elastic interactions.
The cell is represented by a set of vertices $\{\mathbf{r}_i\}$, whose positions
are evolved quasistatically to minimize the total energy. The energy contains
three main contributions: a volume constraint, elastic penalties associated with
cell shape, and pairwise spring interactions between selected vertex pairs.

The total energy is written as
\begin{equation}
E_{\mathrm{tot}} = E_{\mathrm{vol}} + E_{\mathrm{shape}} + E_{\mathrm{pair}} .
\end{equation}

\paragraph{Volume constraint.}
To enforce approximate incompressibility of the cell, the volume $V$ of the
polyhedron is constrained to a target value $V_0$ using a quadratic penalty,
\begin{equation}
E_{\mathrm{vol}} = \frac{k_V}{2}\,(V - V_0)^2 .
\end{equation}
Here $k_V$ is a large stiffness that suppresses volume fluctuations. In the
simulations, the cell is rescaled such that $V_0 = 1$, ensuring that all
deformations are shape-changing rather than volumetric.

\paragraph{Shape elasticity.}
Cell shape is further regulated through elastic penalties associated with edge
lengths and surface geometry. For each elastic element $a$ with length $\ell_a$,
the energy contribution takes the form
\begin{equation}
E_{\mathrm{shape}} = \sum_a \frac{k_A}{2}\,\bigl(A - A_0)^2 ,
\end{equation}
where $A_0$ is the preferred cell surface area. 

\paragraph{Pairwise spring interactions.}
In addition to volume and area constraints, selected pairs of vertices interact through
pairwise springs that model either cell-cell adhesions or ECM springs. The
pairwise interaction energy is
\begin{equation}
E_{\mathrm{pair}} = \sum_{\langle i j \rangle}
\frac{k_{ij}(\ell_{ij})}{2}\,\bigl(|\mathbf{r}_i - \mathbf{r}_j| - s_0\bigr)^2 ,
\end{equation}
where $s_0$ is the rest length of the spring and $\ell_{ij} = |\mathbf{r}_i -
\mathbf{r}_j|$ is the instantaneous separation. Should the pairwise spring connect the vertices of two cells, the spring constant is denoted as $k_{cc}$, or a cell-cell adhesion spring. Should the pairwise spring connect two vertices in the fiber network, it is denoted as $k_{fn}$. Importantly, the effective spring
constant $k_{ij}$ is allowed to depend on extension. For instance, as cell-cell adhesions become unbound with greater strain, then $k_{cc}$ decreases. 

\paragraph{Parameters.} 
The dimensionless parameters implemented in the simulations are: $K_A=1$, $K_V=10$, initially $k_{cc (bo)}=0.01$ for the cell-cell adhesion springs associated with the potential breakout cell and $k_{cc}=0.5$ for the remaining cell-cell adhesion springs. For each 5 percent decrease in the equilibrium spring length for the "pulling" fiber network springs, $k_{cc(bo)}$ decreases by an order of magnitude. Not implementing a decrease in adhesion spring stiffness, given the choice of the remaining parameters, did not result in significant enhanced separation of the breakout cell. The spring constant associated with the fiber network springs is unity with the exception of the "pulling fiber networks springs with is four. We did not implement interfacial tension as we expect the straining by the fiber network springs to dominate over the interfacial tension, at least for the potential breakout cell. For anisotropic cell-cell adhesion in the breakout cell streaming case, the cell-cell adhesion springs between the outer and center cell, or $k_{cc(oc)}$, increased from 0.05 to 4.0 as the pulling fibers' equilibrium spring length is decreased by 20 percent, indicating a strengthening of cell-cell adhesion between the candidate cells.

\end{document}